\documentclass[journal]{IEEEtran}

\usepackage{amsmath,amssymb,amsfonts}
\usepackage{graphicx}
\graphicspath{{figs/}}
\usepackage{hyperref}
\usepackage{cite}
\usepackage{bm}
\usepackage{array}
\usepackage{multirow}
\usepackage{float}
\usepackage{textcomp}
\usepackage{subcaption}

\newtheorem{theorem}{Theorem}
\newtheorem{proposition}{Proposition}
\newtheorem{corollary}{Corollary}
\newtheorem{remark}{Remark}

\begin{document}

\title{CisLunarSense: Opportunistic ISAC for Debris Detection at the Lunar Gateway}

\author{Haofan~Dong,~\IEEEmembership{Student~Member,~IEEE,}
        and~Ozgur~B.~Akan,~\IEEEmembership{Fellow,~IEEE}
\thanks{H.\ Dong and O.\ B.\ Akan are with the Internet of Everything Group, 
Department of Engineering, University of Cambridge, CB3 0FA Cambridge, U.K.(e-mail: hd489@cam.ac.uk, oba21@cam.ac.uk)}
\thanks{O.\ B.\ Akan is also with the Center for neXt-generation 
Communications (CXC), Department of Electrical and Electronics Engineering, 
Ko\c{c} University, 34450 Istanbul, Turkey.}}

\maketitle

\begin{abstract}
We propose CisLunarSense, an opportunistic integrated 
sensing and communication (ISAC) framework that exploits the 
Lunar Gateway's Ka-band relay for monostatic debris detection, 
addressing the absence of cislunar space situational awareness 
infrastructure beyond the reach of ground-based radars.
Using NASA/ESA-documented system parameters with 
author-selected sensing settings and a CR3BP-based 9:2 
near-rectilinear halo orbit model, we derive the
orbit-phase-dependent Cram\'{e}r--Rao bound under OFDM 
inter-carrier interference, quantify a 36~dB cislunar sensing 
advantage over a ground-based Ka-band reference, and design a 
velocity-adaptive processor with mode switching at 337~m/s.
Gateway operational debris ($v_\mathrm{rel} < 50$~m/s) is 
detectable within 700~km with over 30~minutes of warning; 
external threats ($v_\mathrm{rel}$ up to 500~m/s) remain 
detectable within 400--630~km.
An orbit-phase-adaptive allocation reduces the sensing duty 
cycle from 60\% to 19\%, increasing relay throughput from 
44 to 90~Mbps.
A closed-form sensing outage probability for $K$-CPI 
non-coherent integration under Swerling~I fluctuation 
shows that the 10\%-outage detection range reaches 91\% 
of the deterministic maximum at the nominal operating 
point $K = 16$.
\end{abstract}

\begin{IEEEkeywords}
Opportunistic ISAC, cislunar space situational awareness, 
Lunar Gateway, OFDM radar, near-rectilinear halo orbit, 
space debris detection, monostatic radar.
\end{IEEEkeywords}

\section{Introduction}
\label{sec:intro}

The Lunar Gateway, scheduled for NRHO assembly around the 
Earth--Moon $L_2$ point beginning in 2028 \cite{WhiteHouse2022}, 
faces a debris hazard with no existing surveillance capability.
Objects separated during routine operations remain in the NRHO 
vicinity for 30--100 days, repeatedly approaching at 
30--50~m/s \cite{Davis2019,Scheuerle2023}, beyond the reach 
of ground-based radars at cislunar distances 
\cite{NASARadar2023,Barakat2024}.

Integrated sensing and communication (ISAC) 
\cite{LiuFan2022,ZhangCui2021}, where a dual-function waveform 
carries data and enables radar sensing, offers a potential 
solution.
OFDM-based ISAC is a leading 6G candidate 
\cite{ChiragSurvey2022,SturmWiesbeck2011}.
Dual-function radar-communication (DFRC) systems have been 
studied extensively for terrestrial applications 
\cite{Chiriyath2017,Hassanien2016,Mishra2019};
space extensions include THz ISAC for LEO 
\cite{DongAkan2025_TWC}, bistatic LEO detection 
\cite{Anttonen2021}, and rate-splitting ISAC 
\cite{Park2024}, none of which address cislunar channels 
or debris trajectory modeling 
(Table~\ref{tab:positioning}).

\begin{table}[!t]
\centering
\caption{Positioning Relative to Related Work}
\label{tab:positioning}
\renewcommand{\arraystretch}{1.1}
\begin{tabular}{lcccccc}
\hline
\textbf{Work} & \textbf{Domain} & \textbf{Band} & 
\textbf{CRB} & \textbf{Orbit} & \textbf{Debris} & \textbf{S-C Alloc.} \\
\hline
\cite{LiuFan2022} & Terr. & Sub-6 & \checkmark & -- & -- & -- \\
\cite{ChiragSurvey2022} & Terr. & LTE/NR & -- & -- & -- & -- \\
\cite{DongAkan2025_TWC} & LEO & THz & -- & Kepler & -- & -- \\
\cite{Anttonen2021} & LEO & S & -- & Kepler & -- & -- \\
\cite{Park2024} & LEO & -- & -- & Kepler & -- & -- \\
\textbf{This work} & \textbf{Cislunar} & \textbf{Ka} & 
\checkmark & \textbf{CR3BP} & \checkmark & \checkmark \\
\hline
\end{tabular}
\end{table}

Cislunar ISAC differs from prior work in two respects: 
(i)~the free-space channel provides a 36~dB sensing advantage 
over a ground-based Ka-band reference 
(Section~\ref{sec:system_model}); 
(ii)~the NRHO eccentricity ($e = 0.91$) creates a 
$100\times$ velocity variation (0.07--1.7~km/s), coupling 
with ICI to make sensing performance orbit-phase-dependent.

This paper proposes CisLunarSense, exploiting the Gateway's 
HLCS Ka-band relay \cite{Dellandrea2022,Johnson2021} for 
opportunistic monostatic debris detection 
\emph{without dedicated radar hardware or spectrum, assuming 
multicarrier-capable relay evolution}.
While orbiter-class bistatic radar has been demonstrated for 
lunar surface science (e.g., LRO Mini-RF, Chang'e bistatic 
experiments), to our knowledge no prior work addresses 
OFDM-ISAC for cislunar debris detection under the unique 
velocity-coupling constraints of an NRHO. 
The contributions are:
\begin{enumerate}
\item \emph{Cislunar ISAC fundamentals:} 
We identify the NRHO-specific coupling between a $100\times$
intra-orbit velocity sweep and OFDM ICI as a new sensing
regime---absent from LEO (nearly constant $v_\mathrm{rel}$)
and GEO ($v_\mathrm{rel}\!\approx\!0$) ISAC literature---and
(Theorems~\ref{thm:fim}--\ref{thm:crb}) and Swerling~I 
detection range (Theorem~\ref{thm:rmax}).

\item \emph{Adaptive processing and scheduling:} 
velocity-adaptive 2D-FFT/Keystone processor with mode switching 
at 337~m/s; sequential sensing--communication allocation 
(Proposition~\ref{prop:alloc}); orbit-phase-adaptive refinement 
(Proposition~\ref{prop:adaptive}) reducing the duty cycle from 
60\% to 19\% while maintaining detection coverage.

\item \emph{CR3BP-validated evaluation:} 
debris recontact trajectories validated against NASA studies 
\cite{Davis2019,Scheuerle2023} yield 700~km detection for 
operational debris and 400--630~km for external threats.
Monte Carlo validation confirms the detection model 
(Fig.~\ref{fig:mc_pd}), and a closed-form $K$-CPI sensing 
outage probability (Proposition~\ref{prop:outage}) 
quantifies the operational reliable range as a function of 
integration depth (Fig.~\ref{fig:outage}).
\end{enumerate}


Sections~\ref{sec:system_model}--\ref{sec:design} develop the 
system model, fundamental limits, and adaptive design.
Section~\ref{sec:setup} describes the CR3BP debris model, and 
Sections~\ref{sec:results}--\ref{sec:conclusion} present 
results and conclusions.

\begin{figure*}[!t]
\centering
\includegraphics[width=0.85\textwidth]{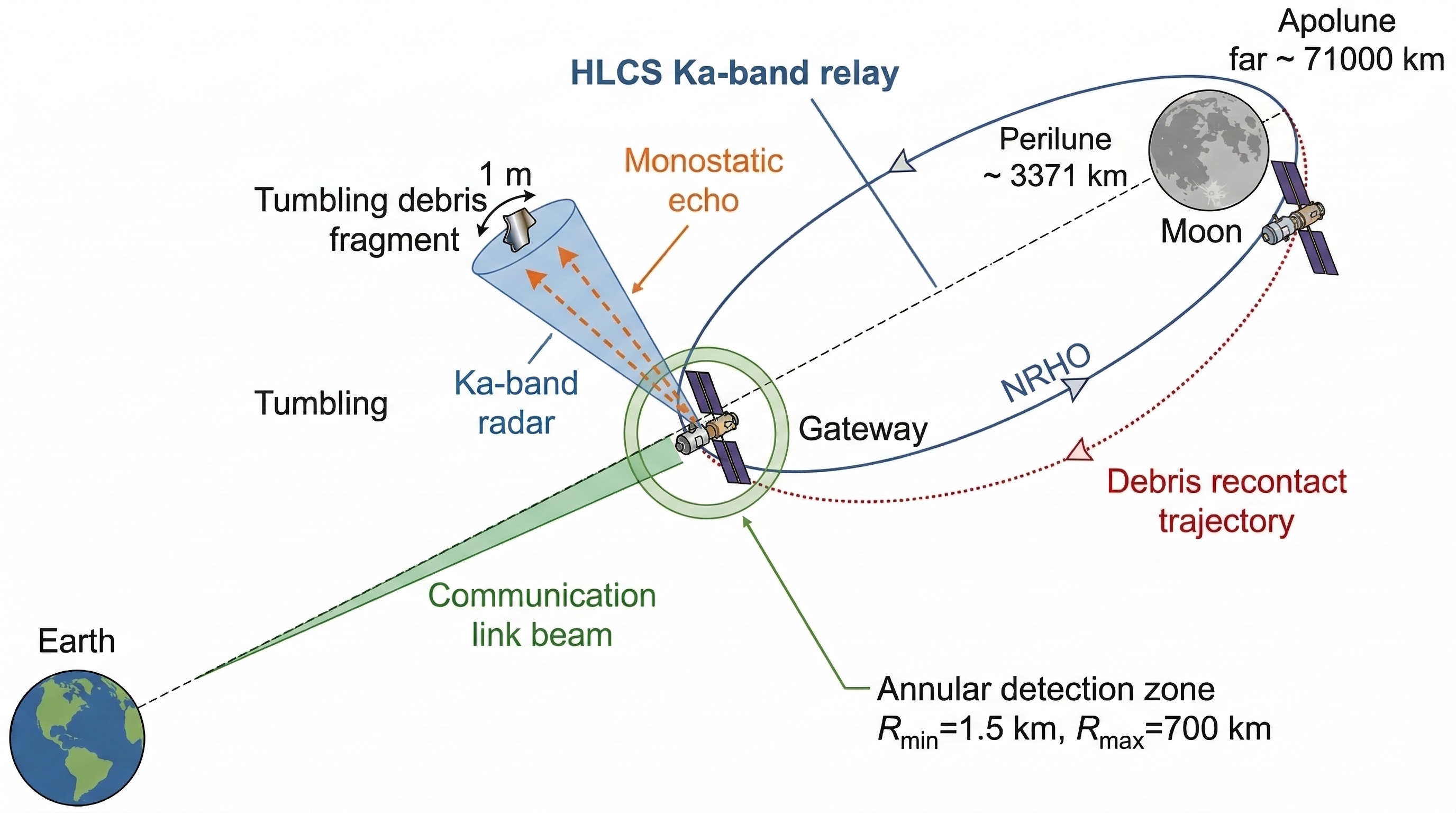}
\caption{CisLunarSense system overview. HLCS Ka-band relay 
beam (green) for Earth uplink; echoes (orange) from debris 
within $R_\mathrm{min}=1.5$~km to $R_\mathrm{max}=700$~km 
(1~m target) are processed for monostatic detection. Red 
dotted: CR3BP-predicted recontact trajectory.}
\label{fig:system}
\end{figure*}

\section{System Model}
\label{sec:system_model}

We consider the Lunar Gateway operating in a 9:2 synodic resonant 
near-rectilinear halo orbit (NRHO) about the Earth--Moon $L_2$ point.
The HALO Lunar Communication System (HLCS), built by Thales Alenia Space
for ESA, serves as the primary communication relay between the lunar surface
and Earth \cite{Dellandrea2022}.
CisLunarSense exploits the HLCS Ka-band relay for opportunistic 
monostatic debris sensing during active relay sessions, requiring 
only a multicarrier-capable transceiver and allocating a 
configurable fraction of session time to sensing.

\subsection{OFDM-ISAC Signal Model}

The HLCS transmits an OFDM waveform at $f_c = 27$~GHz 
($\lambda = 11.1$~mm) with bandwidth $B = 100$~MHz, $N$ 
subcarriers indexed by 
$n \in \{-(N-1)/2, \ldots, (N-1)/2\}$ centered around the 
carrier, $M$ symbols per CPI, subcarrier spacing 
$\Delta f = B/N$, symbol duration 
$T_\mathrm{sym} = (1 + 1/8)/\Delta f$ (CP ratio $1/8$), 
and CPI duration $T_\mathrm{CPI} = M T_\mathrm{sym}$.
Data symbols $d[n,m] \in \mathcal{A}$ use the OQPSK 
constellation specified for HLCS \cite{Dellandrea2022}.
While the current HLCS baseline employs single-carrier 
modulation, the analysis adopts OFDM consistent with 
proposed Ka-band relay upgrades \cite{Johnson2021}; the 
framework is forward-compatible with the operational HLCS 
upon adoption of multicarrier modulation.
The OFDM assumption enables 2D-FFT-based joint 
range-Doppler processing 
\cite{LiuFan2022,Gaudio2020,Liyanaarachchi2021,Mishra2019}.

For a monostatic radar receiver co-located with the HLCS transmitter,
the echo from a point target at range $R$ with radial velocity $v$ is
\begin{equation}
y[n,m] = \alpha \, d[n,m] \, e^{-j2\pi n \Delta f \tau} 
e^{j2\pi m T_\mathrm{sym} f_d} + w[n,m]
\label{eq:rx}
\end{equation}
where $\tau = 2R/c$ is the round-trip delay, 
$f_d = 2v f_c/c$ is the two-way Doppler shift, 
$w[n,m] \sim \mathcal{CN}(0,\sigma_w^2)$ is additive white Gaussian 
noise with $\sigma_w^2 = k_B T_\mathrm{sys} \Delta f$, and 
$\alpha$ is the complex channel coefficient satisfying
\begin{equation}
|\alpha|^2 = \frac{P_t G_t G_r \lambda^2 \sigma}{(4\pi)^3 R^4}
\label{eq:alpha}
\end{equation}
with $P_t$ denoting the transmit power, $G_t$ and $G_r$ the transmit
and receive antenna gains, and $\sigma$ the target radar cross section
(RCS).

Since the transmitted data $d[n,m]$ is known to the co-located 
receiver, element-wise division yields
\begin{equation}
\tilde{y}[n,m] = \alpha \, e^{-j2\pi n \Delta f \tau} 
e^{j2\pi m T_\mathrm{sym} f_d} + \tilde{w}[n,m]
\label{eq:rx_clean}
\end{equation}
where $\tilde{w}[n,m] \sim \mathcal{CN}(0,\sigma_w^2/P_s)$, and
$P_s = |d[n,m]|^2$ is constant for the OQPSK constellation.

\begin{remark}[Monostatic TDD Operation]
\label{rem:sic}
At $R = 100$~km the echo is 176~dB below the transmit power, 
far exceeding Ka-band self-interference cancellation limits 
(65--100~dB) \cite{Lagunas2025}; full-duplex is therefore 
infeasible.
We adopt TDD consistent with the HLCS link structure: each 
relay session alternates between communication transmission 
bursts and dedicated sensing receive intervals of total 
duration $\rho T_\mathrm{obs}$ 
(Proposition~\ref{prop:alloc}).
During each sensing interval, the transmitter is silent 
and the receiver captures echoes of the immediately 
preceding transmission burst.
The TX--RX guard interval ($\sim$10~$\mu$s) sets the minimum 
detectable range $R_\mathrm{min} \approx 1.5$~km; 
$R_\mathrm{max}$ is determined by the link budget 
\eqref{eq:rmax}.
\end{remark}

\subsection{Cislunar Channel Model}

The cislunar channel differs from terrestrial and LEO scenarios 
in three respects: (i)~pure free-space propagation with $R^{-4}$ 
path loss and no atmospheric or ionospheric attenuation; 
(ii)~a low system noise temperature 
$T_\mathrm{sys} \approx 200$~K, set by the CMB (2.725~K) and a 
Ka-band LNA with 2.0~dB noise figure \cite{KaBandLNA2001}, 
versus $\sim$290~K for ground-based receivers; and 
(iii)~zero ground clutter, which in LEO can raise the effective 
noise floor by 20--30~dB.

\begin{remark}[Cislunar Environmental Advantage]
\label{rem:advantage}
Relative to a ground-based Ka-band radar attempting the same 
detection task---the operational baseline for Earth-based SSA 
that fails at cislunar distances 
\cite{NASARadar2023,Barakat2024}---the combined effect of 
these three factors provides an advantage of approximately 
36~dB, decomposed as: atmospheric absorption 
($2 \times 3.0 = 6.0$~dB two-way), ionospheric scintillation 
($2 \times 1.5 = 3.0$~dB two-way), ground clutter (25.0~dB), 
and thermal noise reduction 
($10\log_{10}(290/200) = 1.6$~dB). 
Under the $R^4$ path loss law, this translates to a $7.8\times$ 
improvement in maximum detection range over a co-located 
ground-based reference system. A space-to-space LEO ISAC 
configuration would already capture the atmospheric, 
ionospheric, and clutter terms; the residual cislunar advantage 
over such a configuration arises primarily from the absence of 
RF traffic congestion and is on the order of a few dB.
\end{remark}

For targets of interest ($d \geq 0.3$~m), $x = \pi d/\lambda 
\geq 85$ places scattering in the optical regime with 
$\sigma \approx \pi(d/2)^2$ \cite{XuKennedy2019}; the NASA 
SEM gives $\pm 4$~dB (1$\sigma$) RCS uncertainty (sensitivity 
in Section~VI). Tumbling debris is modeled as Swerling~I as a conservative 
single-parameter baseline: the slow-fluctuation condition 
$T_\mathrm{CPI}\!\ll\!1/(2 f_\mathrm{tumble})$ holds with 
$>\!10\times$ margin for nominal CPI durations and 
0.1--10~rpm tumbling \cite{Hejduk2010,Richards2014}, 
justifying constant RCS within a CPI; class-I (versus 
class-III) is adopted to avoid optimistic detection-threshold 
predictions for morphologies with multiple comparable scatterers. Three 
modeling simplifications warrant note: lunar/Earth-limb 
backscatter is geometrically excluded by NRHO altitude and 
$>\!20^\circ$ off-limb antenna pointing for $R\!>\!1000$~km 
targets, and is suppressed below the noise floor in the 
$400$--$1000$~km headline regime by the antenna sidelobe 
envelope ($>\!30$~dB below boresight at $>\!2\,\theta_\mathrm{bw}$ 
off-axis); spacecraft-body multipath is reduced (but not 
eliminated) by HLCS two-axis gimbal pointing within allowed 
structural exclusion sectors; detailed clutter/multipath 
modeling is left to mission-stage analysis.

\subsection{NRHO Orbit-Phase Parameterization}
\label{sec:nrho}

The Gateway's 9:2 NRHO has perilune radius $r_p = 3{,}371$~km and 
apolune radius $r_a = 71{,}476$~km, with an orbital period of 
$T_\mathrm{orb} = 6.62$~days \cite{Lee2019}.
We parameterize the orbit by phase $\theta \in [0, 2\pi)$, with
$\theta = 0$ at apolune and $\theta = \pi$ at perilune.
Since the NRHO is a three-body orbit that does not follow 
Keplerian conic sections, the Gateway velocity profile 
$v_\mathrm{GW}(\theta)$ is obtained by numerical integration 
of the CR3BP equations of motion (Section~\ref{sec:cr3bp}) 
using the same reference trajectory employed for debris 
propagation.
The resulting velocity varies from 
$v_\mathrm{GW} \approx 72$~m/s at apolune to 
$v_\mathrm{GW} \approx 1{,}673$~m/s at perilune, 
with the transition concentrated in a narrow angular window 
($\sim\!30^\circ$) around perilune.
The Keplerian velocity deviates from the CR3BP profile 
by up to $5\times$ at intermediate orbital phases; all 
numerical results use the CR3BP velocity exclusively.

The CR3BP reference trajectory confirms strong apolune residence: 
the Gateway spends approximately 85--90\% of each orbital period 
in the low-velocity regime near apolune.
This asymmetry has a direct consequence for opportunistic sensing:
the majority of observation time occurs in the low-velocity regime
where OFDM processing gain is maximized.

\subsection{HLCS System Parameters}
\label{sec:hlcs_params}

The HLCS carries two 1.25~m diameter Ka-band antennas on two-axis
gimbals, each driven by a 35~W solid-state power amplifier 
\cite{Dellandrea2022}. The resulting system parameters are 
summarized in Table~\ref{tab:params}.

\begin{table}[!t]
\centering
\caption{HLCS System Parameters for Sensing Analysis}
\label{tab:params}
\begin{tabular}{lll}
\hline
\textbf{Parameter} & \textbf{Value} & \textbf{Source} \\
\hline
Carrier frequency $f_c$ & 27 GHz & \cite{Dellandrea2022} \\
Wavelength $\lambda$ & 11.1 mm & \\
Bandwidth $B$ & 100 MHz & \cite{Johnson2021} \\
Antenna diameter $D$ & 1.25 m & \cite{Dellandrea2022} \\
Antenna gain $G_t = G_r$ & 48.4 dBi & Calculated ($\eta = 0.55$) \\
TX power $P_t$ & 35 W (15.4 dBW) & \cite{Dellandrea2022} \\
EIRP & 63.8 dBW & \\
System temperature $T_\mathrm{sys}$ & 200 K & \cite{KaBandLNA2001} \\
Subcarriers $N$ & 1024 & \\
Symbols/CPI $M$ & 64 & \\
Subcarrier spacing $\Delta f$ & 97.7 kHz & \\
CPI duration $T_\mathrm{CPI}$ & 0.74 ms & \\
\hline
\end{tabular}
\end{table}
The Gateway--Earth relay link distance varies from 
$d_\mathrm{peri} \approx 381{,}000$~km at perilune 
(Gateway nearest to Earth) to 
$d_\mathrm{apo} \approx 456{,}000$~km at apolune, 
producing a 1.6~dB one-way path loss variation.
The corresponding relay throughput 
$R_\mathrm{full}(\theta)$ ranges from approximately 
104~Mbps at apolune to 116~Mbps at perilune, 
motivating the orbit-phase-adaptive allocation in 
Proposition~\ref{prop:adaptive}.

The HLCS operates on demand during lunar relay sessions, with a 
default standby mode when no surface assets require service
\cite{Dellandrea2022}. The proposed sensing function is therefore
\emph{opportunistic}: it exploits echoes of communication waveforms
that are transmitted for relay purposes, adding no dedicated radar hardware or spectrum while 
maintaining the minimum relay throughput 
(Section~\ref{sec:allocation}).

\section{Orbit-Phase-Dependent Fundamental Limits}
\label{sec:crb}

\begin{figure*}[!t]
\centering
\includegraphics[width=0.92\textwidth]{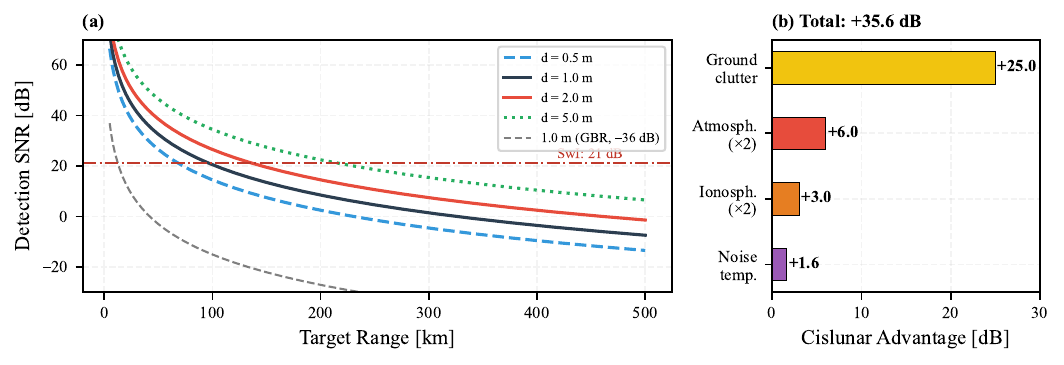}
\caption{(a)~Detection SNR vs target range at apolune 
($K=16$) for four debris diameters and a ground-based 
Ka-band reference ($-36$~dB penalty, see panel b). 
Swerling~I threshold: 21.1~dB. (b)~Decomposition of the 
35.6~dB cislunar advantage.}
\label{fig:snr_advantage}
\end{figure*}

\subsection{Sensing SNR}

\begin{figure*}[!t]
\centering
\includegraphics[width=0.92\textwidth]{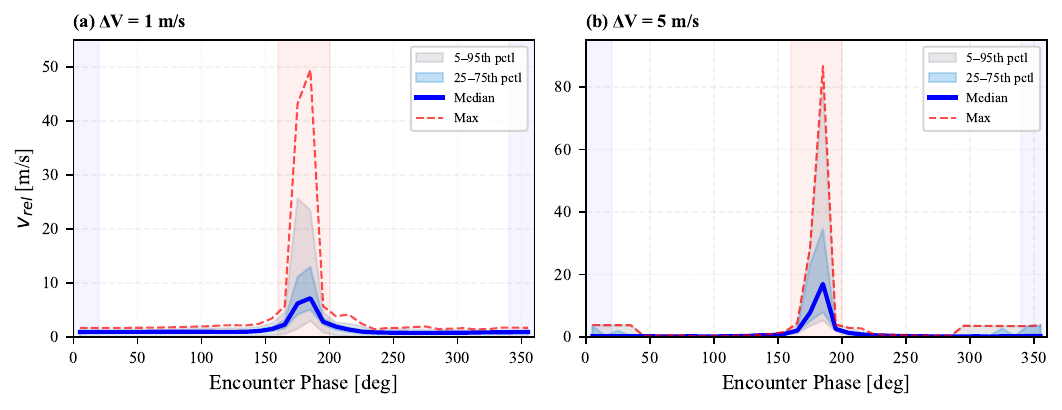}
\caption{CR3BP-derived debris recontact velocity as a function of encounter orbital phase ($0^\circ$ = apolune, $180^\circ$ = perilune) for (a) $\Delta v = 1$~m/s and (b) $\Delta v = 5$~m/s separations. Shaded bands show 5th--95th and 25th--75th percentile ranges; dashed line is the maximum observed value.}
\label{fig:vrel_bands}
\end{figure*}

\begin{figure*}[!t]
\centering
\includegraphics[width=0.85\textwidth]{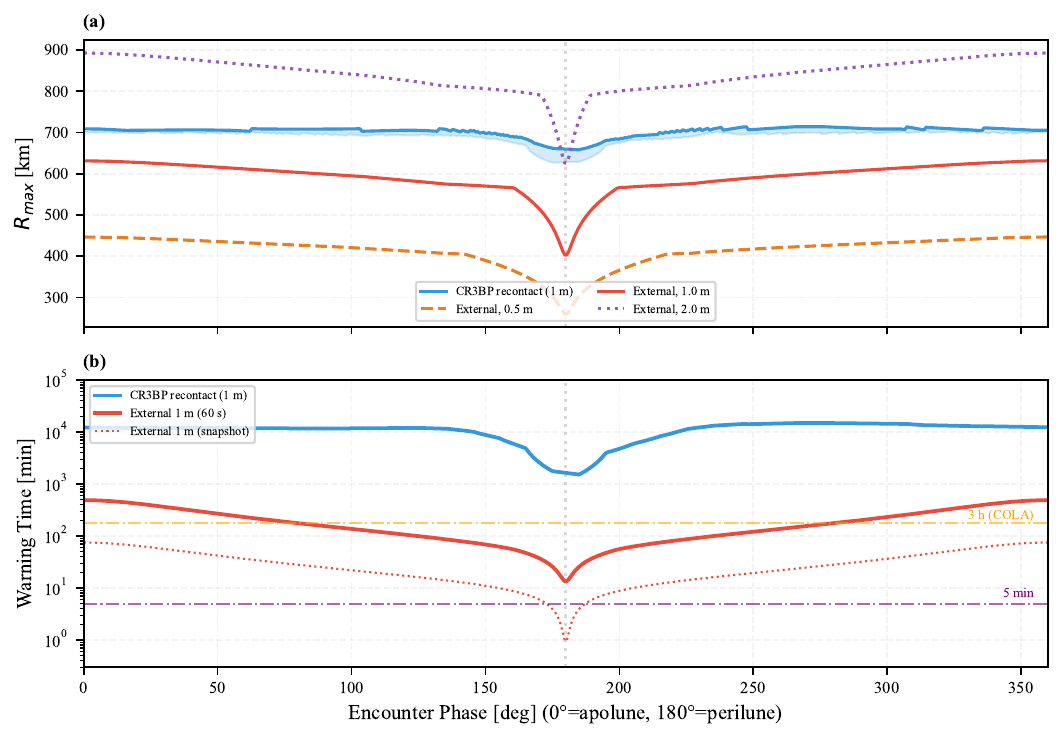}
\caption{(a) Maximum detection range along the NRHO for a 1~m target (CR3BP recontact, blue with uncertainty band) and parametric external-threat stress tests at three target sizes ($v_\mathrm{rel} = 0.3 \times v_\mathrm{GW}$). (b) Proximity warning time with 3-hour COLA and 5-minute emergency thresholds. The snapshot baseline ($K = 16$, dotted) falls below 1~minute near perilune.}
\label{fig:rmax_warning}
\end{figure*}

The total sensing SNR after 2D-FFT processing of one CPI 
follows from substituting the per-element signal power 
$|\alpha|^2$ from \eqref{eq:alpha} and the per-subcarrier 
noise variance $\sigma_w^2 = k_B T_\mathrm{sys}\Delta f$ 
into $\gamma = |\alpha|^2 \cdot NM \cdot \mathcal{G}/\sigma_w^2$. 
The coherent gain $NM$ multiplies the signal power while 
the noise bandwidth folds back to the full $B = N\Delta f$, 
yielding
\begin{equation}
\gamma(\theta, R, \sigma) = 
\frac{\mathrm{EIRP} \cdot G_r \cdot \lambda^2 \cdot \sigma \cdot 
NM \cdot \mathcal{G}(v_\mathrm{rel}(\theta))}
{(4\pi)^3 \cdot R^4 \cdot k_B T_\mathrm{sys} B}
\label{eq:snr}
\end{equation}
where $\mathrm{EIRP} = P_t G_t$, and 
$\mathcal{G}(v_\mathrm{rel})$ is the ICI-aware processing 
efficiency defined in Proposition~\ref{prop:ici}. The relation 
to the per-subcarrier SNR used in Theorem~\ref{thm:fim} is 
$\gamma = NM\,\mathcal{G}\,\gamma_\mathrm{sc}$, where the 
factor $N$ reflects coherent integration across subcarriers.

\begin{proposition}[ICI-Aware Processing Efficiency]
\label{prop:ici}
For a target with radial velocity $v_\mathrm{rel}$ at carrier 
frequency $f_c$, the coherent processing efficiency relative to the 
ideal gain $NM$ is
\begin{equation}
\mathcal{G}(v_\mathrm{rel}) = 
\mathrm{sinc}^2\!\left(\frac{2 v_\mathrm{rel} f_c}{c \Delta f}\right) 
\cdot \eta_\mathrm{impl}
\label{eq:G_proc}
\end{equation}
where $\mathrm{sinc}(x) \triangleq \sin(\pi x)/(\pi x)$ and 
$\eta_\mathrm{impl} \in (0,1]$ accounts for windowing and 
implementation losses.
\end{proposition}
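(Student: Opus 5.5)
We derive the sinc$^2$ factor by tracking what a Doppler shift does within a single OFDM symbol. The model \eqref{eq:rx} assumes the Doppler phase is constant over each symbol; we relax that assumption. The continuous-time echo of subcarrier $n$ in symbol $m$ carries the full time-varying phase $e^{j2\pi f_d t}$. After CP removal, sampling and the length-$N$ DFT, the receiver output on subcarrier $n$ is a weighted sum over all transmitted subcarriers $k$. The weights form the standard ICI kernel
\[
C[k-n] = \frac{1}{N}\sum_{i=0}^{N-1} e^{j2\pi i (k-n+\epsilon)/N},\qquad \epsilon \triangleq \frac{f_d}{\Delta f} = \frac{2 v_\mathrm{rel} f_c}{c\,\Delta f}.
\]
The diagonal term is $C[0] = e^{j\pi\epsilon(N-1)/N}\,\sin(\pi\epsilon)/\big(N\sin(\pi\epsilon/N)\big)$. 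The off-diagonal terms $C[k-n]$, $k\neq n$, are the ICI.

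\textbf{Step 1 (desired-signal attenuation).} Keep only the diagonal term. The per-element useful amplitude is multiplied by $|C[0]|$. Two facts let us replace $|C[0]|$ by $\mathrm{sinc}(\epsilon)$: the phase of $C[0]$ is common to all subcarriers, and $N\sin(\pi\epsilon/N)\to\pi\epsilon$ for large $N$ (here $N=1024$). After division by $d[n,m]$ as in \eqref{eq:rx_clean}, the 2D-FFT sums $NM$ terms with the same attenuation. The matched peak power is therefore $|\alpha|^2 N^2M^2\,\mathrm{sinc}^2(\epsilon)$. The noise power is $NM\sigma_w^2/P_s$, unchanged because the DFT is unitary. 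The resulting coherent gain is $NM\,\mathrm{sinc}^2(\epsilon)$ relative to the per-element SNR.

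\textbf{Step 2 (ICI term).} The off-diagonal terms scramble the random data $d[k,m]$ into bin $n$. After element-wise division, these terms are zero-mean and approximately uncorrelated across $(n,m)$, so the 2D-FFT does not integrate them coherently. They therefore behave as an additional noise floor. Its power is $|\alpha|^2(1-\mathrm{sinc}^2(\epsilon))$ per element, by Parseval applied to the kernel. In the regime of interest ($|\epsilon|\le 0.5$) and at the low SNRs near $R_\mathrm{max}$, this floor is far below $\sigma_w^2$. It is therefore absorbed into $\eta_\mathrm{impl}$ together with the window loss; for example, a Hann window contributes about $1.76$~dB of SNR loss. This yields \eqref{eq:G_proc} with $\eta_\mathrm{impl}\in(0,1]$.

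\textbf{Main obstacle.} The hardest step is justifying that the ICI contributes only a negligible additive floor and no coherent peak. This rests on independence of the data symbols and on $|\epsilon|$ being moderate. For the high-velocity perilune regime, the plan is to state the result as the coherent-peak efficiency. Any residual loss from the ICI floor or from range migration is then deferred to $\eta_\mathrm{impl}$ and to the Keystone processing mode, so the proposition remains an exact statement about the desired-signal gain.
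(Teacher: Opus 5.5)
Your proposal is correct and takes the same route as the paper. The paper's proof simply cites the DFT response of an OFDM receiver to a Doppler-shifted input: useful power is scaled by $\mathrm{sinc}^2(f_d/\Delta f)$, and the remaining $1-\mathrm{sinc}^2$ leaks to adjacent subcarriers as ICI. You derive exactly this from the ICI kernel, the large-$N$ limit of $C[0]$, and Parseval. Your extra step, showing that the data-scrambled ICI integrates non-coherently and is negligible near $R_\mathrm{max}$ (with the large-$\epsilon$ case stated as a coherent-peak efficiency), makes explicit an assumption the paper leaves implicit.
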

Throughout this paper, $\eta_\mathrm{impl} = 0.56$ ($-2.5$~dB), 
accounting for Hamming window processing loss ($-1.4$~dB) and 
digital implementation margin ($-1.1$~dB) \cite{Richards2014}.

\begin{IEEEproof}
The ICI power loss follows from the DFT response of an OFDM 
receiver to a Doppler-shifted input signal. The useful signal power
at the target subcarrier is attenuated by the factor 
$|\mathrm{sinc}(f_d/\Delta f)|^2$, where $f_d = 2v_\mathrm{rel}f_c/c$
is the two-way Doppler shift \cite{Braun2014}.
The remaining power $1 - \mathrm{sinc}^2(f_d/\Delta f)$ 
redistributes across adjacent subcarriers as interference.
This model assumes ideal time-frequency synchronization, 
negligible phase noise, and range migration within one 
resolution cell per CPI; the last condition is ensured by 
Proposition~\ref{prop:cpi} for the HLCS parameters in 
Table~\ref{tab:params}, while the first two are standard 
assumptions for Ka-band space relay links \cite{Richards2014}.
\end{IEEEproof}

The efficiency $\mathcal{G}$ depends on orbital phase through 
$v_\mathrm{rel}(\theta)$. 
For co-orbital debris at apolune ($v_\mathrm{rel} \approx 10$~m/s),
$f_d/\Delta f \approx 0.018$ and $\mathcal{G} \approx \eta_\mathrm{impl}$
(negligible ICI).
For transfer-orbit debris at perilune 
($v_\mathrm{rel} \approx 500$~m/s),
$f_d/\Delta f \approx 0.92$ and $\mathcal{G}$ drops by over 20~dB.

\subsection{Fisher Information and Cram\'{e}r-Rao Bounds}

The parameter vector of interest is 
$\boldsymbol{\xi} = [\tau, f_d]^T$.
After data-symbol removal \eqref{eq:rx_clean}, the observation
$\tilde{\mathbf{y}} \in \mathbb{C}^{NM}$ follows a complex 
Gaussian distribution conditioned on $\boldsymbol{\xi}$. The 
amplitude $\alpha$ is absorbed into the per-subcarrier SNR 
$\gamma_\mathrm{sc} = |\alpha|^2 P_s / \sigma_w^2$ and does 
not appear as a separate nuisance parameter in the FIM; the 
score vectors for $\tau$ and $f_d$ are orthogonal to 
$\partial\mathbf{s}/\partial|\alpha|$ by the centered-index 
structure of the steering vector.

\begin{theorem}[Cislunar ISAC Fisher Information Matrix]
\label{thm:fim}
Under the signal model \eqref{eq:rx_clean} with known data 
symbols and Gaussian noise, the FIM for $(\tau, f_d)$ estimation
from a single CPI is
\begin{equation}
\mathbf{J}(\tau, f_d) = 2NM\gamma_\mathrm{sc}
\begin{bmatrix}
4\pi^2 \beta_\mathrm{rms}^2 & 0 \\
0 & 4\pi^2 \gamma_\mathrm{rms}^2
\end{bmatrix}
\label{eq:fim}
\end{equation}
where $\gamma_\mathrm{sc} = |\alpha|^2 P_s / \sigma_w^2$ is the 
per-subcarrier SNR, and
\begin{align}
\beta_\mathrm{rms}^2 &= \frac{N^2 - 1}{12}\,\Delta f^2 
\approx \frac{B^2}{12}
\label{eq:beta} \\
\gamma_\mathrm{rms}^2 &= \frac{M^2 - 1}{12}\,T_\mathrm{sym}^2 
\approx \frac{T_\mathrm{CPI}^2}{12}
\label{eq:gamma}
\end{align}
are the squared RMS bandwidth and squared RMS observation time,
respectively. The off-diagonal terms vanish due to the separable
structure of the delay-Doppler steering vector.
\end{theorem}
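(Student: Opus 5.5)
I will use the Slepian--Bangs formula for a deterministic signal in circular complex Gaussian noise. Write the noise-free mean of \eqref{eq:rx_clean} as $s_{n,m}(\boldsymbol{\xi}) = \alpha\, e^{-j2\pi n\Delta f\tau} e^{j2\pi m T_\mathrm{sym} f_d}$. The noise is $\tilde w[n,m]\sim\mathcal{CN}(0,\sigma_w^2/P_s)$, independent across $(n,m)$ and not dependent on $\boldsymbol{\xi}$. Hence
\begin{equation*}
[\mathbf{J}]_{ij} = \frac{2P_s}{\sigma_w^2}\,\mathrm{Re}\sum_{n,m}\frac{\partial s_{n,m}^*}{\partial \xi_i}\frac{\partial s_{n,m}}{\partial \xi_j}.
\end{equation*}
The first step is to differentiate the separable exponential:
\begin{equation*}
\frac{\partial s_{n,m}}{\partial\tau} = -j2\pi n\Delta f\, s_{n,m}, \qquad \frac{\partial s_{n,m}}{\partial f_d} = j2\pi m T_\mathrm{sym}\, s_{n,m}.
\end{equation*}
Since $|s_{n,m}|^2=|\alpha|^2$ for every element, each product collapses to $|\alpha|^2$ times a polynomial in the indices. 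The diagonal entries become $\frac{2P_s|\alpha|^2}{\sigma_w^2}\,4\pi^2\Delta f^2 M\sum_n n^2$ and $\frac{2P_s|\alpha|^2}{\sigma_w^2}\,4\pi^2 T_\mathrm{sym}^2 N\sum_m m^2$. The cross term is $-\frac{2P_s|\alpha|^2}{\sigma_w^2}\,4\pi^2\Delta f\, T_\mathrm{sym}\big(\sum_n n\big)\big(\sum_m m\big)$.

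The second step evaluates these index sums. For the centered subcarrier index $n\in\{-(N-1)/2,\dots,(N-1)/2\}$ we have $\sum_n n=0$ and $\sum_n n^2 = N(N^2-1)/12$. The cross term therefore vanishes, because $\sum_n n$ is a factor of it. Substituting gives $J_{\tau\tau}=2NM\gamma_\mathrm{sc}\,4\pi^2\frac{N^2-1}{12}\Delta f^2$, which matches \eqref{eq:beta}. The approximation $\beta_\mathrm{rms}^2\approx B^2/12$ follows from $N\Delta f=B$ and $N\gg1$.

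\textbf{Main obstacle.} The symbol index as written is not centered, so the same argument does not automatically give the $M$-dependence in \eqref{eq:gamma}. For $m=0,\dots,M-1$ we get $\sum_m m^2 = (M-1)M(2M-1)/6$ and $\sum_m m\neq0$, which produces both a nonzero cross term and the wrong diagonal constant. I would resolve this by re-referencing the Doppler phase to the CPI midpoint, i.e.\ replacing $m$ by $m-(M-1)/2$. This changes only a deterministic phase $e^{j\pi(M-1)T_\mathrm{sym}f_d}$, which is absorbed into $\alpha$; this is exactly the ``centered-index structure'' invoked in the text before the theorem. After recentering, $\sum_m m=0$ and $\sum_m m^2 = M(M^2-1)/12$. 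This yields $J_{f_df_d}=2NM\gamma_\mathrm{sc}\,4\pi^2\frac{M^2-1}{12}T_\mathrm{sym}^2$ together with $\gamma_\mathrm{rms}^2\approx T_\mathrm{CPI}^2/12$.

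The recentering also addresses the nuisance-parameter issue. The phase of $\alpha$ is itself a nuisance parameter, with derivative $j s_{n,m}$, and its cross-information with $\tau$ and $f_d$ is proportional to $\sum_n n$ and $\sum_m m$. Both sums are zero, so the information about $\alpha$ (magnitude and phase) decouples from $(\tau,f_d)$. The Schur complement then equals the $2\times2$ block above, which justifies \eqref{eq:fim} as the effective FIM.
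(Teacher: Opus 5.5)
Your proof is correct and follows the same route as the paper: Slepian--Bangs on the separable delay--Doppler steering vector, the centered index sums $N(N^2-1)/12$ and $M(M^2-1)/12$, and a vanishing cross term because $\sum_n n=0$. Your explicit recentering of the symbol index makes rigorous what the paper simply asserts by declaring both indices centered; it is legitimate because $\arg\alpha$ is an unknown nuisance, and note that a non-centered $m$ creates a coupling between $f_d$ and that phase, not a $\tau$--$f_d$ cross term.
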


\begin{IEEEproof}
Applying Slepian--Bangs to 
$\tilde{\mathbf{y}}\sim\mathcal{CN}(\alpha\,\mathbf{a}(\tau,f_d),
(\sigma_w^2/P_s)\mathbf{I})$ with separable steering vector 
$[\mathbf{a}]_{n+mN}=e^{-j2\pi n\Delta f\tau}
e^{j2\pi m T_\mathrm{sym}f_d}$ over centered indices, the 
diagonal terms reduce to $\sum n^2=N(N^2-1)/12$ and 
$\sum m^2=M(M^2-1)/12$, yielding 
$[\mathbf{J}]_{\tau\tau}=8\pi^2 NM\gamma_\mathrm{sc}\beta_\mathrm{rms}^2$ 
and similarly for $f_d$. The cross-term $\sum_n n\cdot\sum_m m$ 
vanishes under centered indexing.
\end{IEEEproof}

\begin{theorem}[Orbit-Phase-Dependent Achievable Bound]
\label{thm:crb}
For the standard 2D-FFT radar processor with effective output SNR
$\gamma(\theta,R,\sigma)$ from \eqref{eq:snr}, the CRBs for range 
and velocity estimation at orbital phase $\theta$ are
\begin{align}
\mathrm{CRB}_R(\theta) &= 
\frac{c^2}{32\pi^2 \gamma(\theta,R,\sigma) \cdot \beta_\mathrm{rms}^2}
\label{eq:crb_R} \\
\mathrm{CRB}_v(\theta) &= 
\frac{c^2}{32\pi^2 f_c^2 \cdot \gamma(\theta,R,\sigma) \cdot 
\gamma_\mathrm{rms}^2}
\label{eq:crb_v}
\end{align}
where $\gamma(\theta,R,\sigma)$ is given by \eqref{eq:snr}.
Both CRBs depend on orbital phase $\theta$ through the 
ICI-aware processing efficiency $\mathcal{G}(v_\mathrm{rel}(\theta))$.
\end{theorem}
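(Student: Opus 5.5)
Starting from Theorem~\ref{thm:fim}, we invert the diagonal FIM and change variables from $(\tau,f_d)$ to $(R,v)$. We then replace the per-subcarrier SNR with the post-processing SNR $\gamma$ of \eqref{eq:snr}.

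Because $\mathbf{J}$ in \eqref{eq:fim} is diagonal, its inverse is diagonal as well, and we read off
\begin{equation*}
\mathrm{CRB}_\tau = \frac{1}{8\pi^2 NM\gamma_\mathrm{sc}\beta_\mathrm{rms}^2},\qquad
\mathrm{CRB}_{f_d} = \frac{1}{8\pi^2 NM\gamma_\mathrm{sc}\gamma_\mathrm{rms}^2}.
\end{equation*}
The parametrizations are linear, $R = c\tau/2$ and $v = c f_d/(2f_c)$, so the scalar reparametrization rule for the CRB gives
\begin{equation*}
\mathrm{CRB}_R = \frac{c^2}{4}\mathrm{CRB}_\tau,\qquad \mathrm{CRB}_v = \frac{c^2}{4f_c^2}\mathrm{CRB}_{f_d}.
\end{equation*}
The Jacobian of the map is diagonal, so no cross terms arise. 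At this stage both bounds carry the factor $32\pi^2 NM\gamma_\mathrm{sc}$ in the denominator.

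Next we bring in the processor. Theorem~\ref{thm:fim} describes the ideal observation with no Doppler-induced distortion. The statement, however, concerns the standard 2D-FFT processor, whose output SNR is reduced by the ICI and implementation losses captured in $\mathcal{G}$ (Proposition~\ref{prop:ici}). We therefore model the processed data as an equivalent ideal observation with effective per-subcarrier SNR $\mathcal{G}\gamma_\mathrm{sc}$. By the relation $\gamma = NM\mathcal{G}\gamma_\mathrm{sc}$ stated after \eqref{eq:snr}, the product $NM\gamma_\mathrm{sc}$ in the denominators becomes $\gamma(\theta,R,\sigma)$. This yields \eqref{eq:crb_R} and \eqref{eq:crb_v}.

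The $\theta$-dependence then follows immediately. Among the terms in \eqref{eq:snr}, only $\mathcal{G}(v_\mathrm{rel}(\theta))$ depends on orbit phase, because $R$ and $\sigma$ are target parameters and the remaining terms are fixed system constants.

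The main obstacle is justifying this effective-SNR substitution. With ICI, the true model is no longer $\alpha\mathbf{a}(\tau,f_d)$ plus white noise: leakage from neighbouring subcarriers acts as structured, signal-dependent interference, so the exact FIM is not simply the ideal FIM scaled by $\mathcal{G}$. I would handle this by calling the result an \emph{achievable} (processor-conditioned) bound, as the title does. The argument has three parts:
\begin{enumerate}
\item Treat the ICI leakage as an additional Gaussian disturbance whose power is $1-\mathrm{sinc}^2$ times the signal power, following the proof of Proposition~\ref{prop:ici}.
\item Observe that the useful component is attenuated by $\mathrm{sinc}^2$ and the windowing loss by $\eta_\mathrm{impl}$.
\item Note that at the low-SNR echo levels of interest ($\gamma_\mathrm{sc}\ll 1$), the ICI term is negligible relative to thermal noise.
\end{enumerate}
Under these three points the Slepian--Bangs computation goes through with $\gamma_\mathrm{sc}$ replaced by $\mathcal{G}\gamma_\mathrm{sc}$. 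The range-migration condition of Proposition~\ref{prop:cpi} keeps the separable steering structure intact, so the centered-index cancellation of the off-diagonal terms continues to hold.
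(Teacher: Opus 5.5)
Your computational core is the paper's route. You invert the diagonal FIM of Theorem~\ref{thm:fim}, rescale by $(c/2)^2$ and $(c/(2f_c))^2$, and replace $NM\gamma_\mathrm{sc}$ by the processor output SNR $\gamma = NM\mathcal{G}\gamma_\mathrm{sc}$. The $32\pi^2$ constants and the $\theta$-dependence through $\mathcal{G}(v_\mathrm{rel}(\theta))$ come out correctly. The paper gives no separate proof beyond this. It treats \eqref{eq:crb_R}--\eqref{eq:crb_v} as the ideal-model CRB expression evaluated at the 2D-FFT processor's effective output SNR. Remark~\ref{rem:crb_practical} says explicitly that the information-theoretic CRB carries no $\mathcal{G}$ and is tighter by $1/\mathcal{G}$.

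The step that fails is your attempt to justify the substitution by a Slepian--Bangs computation with $\gamma_\mathrm{sc}\to\mathcal{G}\gamma_\mathrm{sc}$ on an equivalent observation. Fisher information is a property of the data distribution and is unchanged by invertible processing. The 2D-FFT is unitary and a Hamming window has nonzero weights, so $\eta_\mathrm{impl}$ cannot emerge from any FIM. The $\mathrm{sinc}^2$ factor is not an information loss either. In a correctly specified model that includes intra-symbol Doppler, the leaked energy is still signal, with $f_d$-dependent structure. Declaring it Gaussian noise therefore gives the FIM of a different, mismatched model, not a bound valid for the actual observations. Your point (iii) only addresses the size of the leaked component, not this conceptual problem. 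Moreover, with $\alpha$ treated as known as in Theorem~\ref{thm:fim}, the $f_d$-dependence of the sinc attenuation adds an extra term to $[\mathbf{J}]_{f_d f_d}$. So even the mismatched model does not give an exact $\mathcal{G}$-rescaling. Drop the equivalent-observation argument and state the result as the paper does: a processor-conditioned plug-in of $\gamma(\theta,R,\sigma)$ into the ideal CRB expressions, which is looser than the true CRB by $1/\mathcal{G}$.
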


\begin{remark}[Processor-Specific vs.\ Information-Theoretic Bound]
\label{rem:crb_practical}
Equations \eqref{eq:crb_R}--\eqref{eq:crb_v} give the 2D-FFT 
processor's achievable bound, with the orbit-phase dependence 
entering through $\mathcal{G}(v_\mathrm{rel}(\theta))$. The 
information-theoretic CRB from Theorem~\ref{thm:fim} carries 
no $\mathcal{G}$ factor (the FIM is invariant to 
post-processing) and is therefore tighter than 
\eqref{eq:crb_R}--\eqref{eq:crb_v} by a factor of 
$1/\mathcal{G}$. In the low-velocity regime where 
$\mathcal{G}\approx\eta_\mathrm{impl}\approx 0.56$, this 
amounts to a fixed implementation loss of approximately 
$2.5$~dB; at high velocity the gap widens monotonically. 
For a 1~m target at 100~km with $v_\mathrm{rel}=10$~m/s and 
$K=16$, the resulting range RMSE floor is below 1~m.
\end{remark}

\subsection{Maximum Sensing Range}

\begin{theorem}[Feasibility Region]
\label{thm:rmax}
For a Swerling~I fluctuating target with RCS $\sigma$, the maximum
monostatic sensing range achieving detection probability $P_d$ at 
false alarm rate $P_{fa}$ with $K$ non-coherently integrated CPIs is
\begin{equation}
R_\mathrm{max}(\theta,\sigma,K) = 
\left[\frac{\mathrm{EIRP} \cdot G_r \cdot \lambda^2 \cdot \sigma 
\cdot NM \cdot \mathcal{G}(v_\mathrm{rel}(\theta)) \cdot K^\kappa}
{(4\pi)^3 \cdot k_B T_\mathrm{sys} B \cdot 
\gamma_\mathrm{th}}\right]^{\!1/4}
\label{eq:rmax}
\end{equation}
where $\gamma_\mathrm{th} = \ln(P_{fa})/\ln(P_d) - 1$ is the 
Swerling~I detection threshold (21.1~dB for $P_d = 0.9$, 
$P_{fa} = 10^{-6}$) and $\kappa \approx 0.85$ is the 
non-coherent integration efficiency exponent.
\end{theorem}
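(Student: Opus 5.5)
The proof inverts the single-CPI SNR expression \eqref{eq:snr} after two modifications: replacing the per-CPI requirement by the $K$-CPI integrated requirement, and fixing the threshold by the Swerling~I detection statistics. First I would establish the threshold. For a Swerling~I target the complex amplitude $\alpha$ is circular Gaussian, so after the 2D-FFT the cell under test is $\mathcal{CN}$ with variance $\sigma_w^2(1+\bar\gamma)$ under $H_1$ and $\sigma_w^2$ under $H_0$, where $\bar\gamma$ is the mean output SNR. A square-law detector with threshold $T$ then gives
\[
P_{fa}=e^{-T}, \qquad P_d=e^{-T/(1+\bar\gamma)}.
\]
Eliminating $T$ gives $\ln P_d=\ln P_{fa}/(1+\bar\gamma)$, hence the required mean SNR is $\bar\gamma=\ln(P_{fa})/\ln(P_d)-1=\gamma_\mathrm{th}$. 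Evaluating at $P_d=0.9$, $P_{fa}=10^{-6}$ reproduces $21.1$~dB. This step is exact for $K=1$.

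Next I would account for $K$ non-coherently integrated CPIs. The required per-CPI SNR is reduced by an integration gain $K^\kappa$, so detection holds when $\gamma(\theta,R,\sigma)\,K^\kappa\ge\gamma_\mathrm{th}$. I would justify $\kappa\approx0.85$ by citing the standard non-coherent integration-gain curves (\cite{Richards2014}), fitting $G_{nc}(K)\approx K^\kappa$ over the operating range $K\lesssim 32$. I would not try to derive it exactly, because for Swerling~I with independent CPIs the exact integrated statistic is gamma-distributed and the gain is not a pure power law. I expect this to be the main obstacle: the exponent is empirical and the statement says ``$\approx$''. I would therefore frame $K^\kappa$ as a modeling approximation that the later closed-form outage analysis (Proposition~\ref{prop:outage}) can check. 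I would also note that the Swerling~I slow-fluctuation justification in Section~\ref{sec:system_model} makes the RCS constant within a CPI. Whether it decorrelates across CPIs affects the effective $\kappa$, so I would state that assumption explicitly.

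Finally, I would substitute \eqref{eq:snr} into $\gamma K^\kappa=\gamma_\mathrm{th}$, with $\mathcal{G}$ taken from Proposition~\ref{prop:ici} and its range-migration condition supplied by Proposition~\ref{prop:cpi}. Since $\gamma$ is strictly decreasing in $R$ through the factor $R^{-4}$, the feasible set is $\{R\le R_\mathrm{max}\}$. Solving for $R$ and taking fourth roots yields \eqref{eq:rmax}. The dependence on $\theta$ enters only through $\mathcal{G}(v_\mathrm{rel}(\theta))$, which completes the characterization of the feasibility region.
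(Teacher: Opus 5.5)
Your proposal follows the paper's proof essentially step for step: invert the single-CPI Swerling~I relation $P_d=P_{fa}^{1/(1+\bar\gamma)}$ to get $\gamma_\mathrm{th}=\ln P_{fa}/\ln P_d-1$ (you derive this relation from the exponential square-law statistics, whereas the paper simply cites it), treat $K$-CPI non-coherent integration as an empirically fitted $K^\kappa$ gain, and solve $\gamma K^\kappa=\gamma_\mathrm{th}$ with \eqref{eq:snr} for $R$. Two points to adjust. First, the paper fits $\kappa$ by least squares over $K\in[1,500]$ against the exact incomplete-gamma detection curve, so the formula covers the $K=16$--$256$ and larger session-optimized $K^*(\theta)$ values used later, which a fit over $K\lesssim 32$ would not. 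Second, Proposition~\ref{prop:outage} takes $R_\mathrm{max,det}$ from this theorem as an input, so it cannot independently check $\kappa$; the paper's check is the Monte Carlo comparison of Fig.~\ref{fig:mc_pd}.
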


\begin{IEEEproof}
For a single CPI ($K = 1$), the Swerling~I detection probability 
satisfies $P_d = P_{fa}^{1/(1+\bar{\gamma})}$, where
$\bar{\gamma}$ is the average SNR \cite{Richards2014}.
Solving for $\bar{\gamma}$ gives the threshold 
$\gamma_\mathrm{th} = \ln P_{fa}/\ln P_d - 1$.
For $K>1$, the integration gain follows $K^\kappa$ with 
$\kappa\approx 0.85$, obtained by least-squares fit over 
$K\in[1,500]$ at $(P_d,P_\mathrm{fa})=(0.9,10^{-6})$ against 
the exact incomplete-gamma form \cite[Ch.~7]{Richards2014} 
(within 1~dB). Substituting into \eqref{eq:snr} and solving 
for $R$ yields \eqref{eq:rmax}.
\end{IEEEproof}

\begin{corollary}[Detection Range Summary]
\label{cor:range}
Table~\ref{tab:rmax} lists the maximum detection range for 
representative debris targets at apolune ($v_\mathrm{rel} = 10$~m/s)
with $K = 16$ non-coherently integrated CPIs.
\end{corollary}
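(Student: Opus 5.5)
The corollary follows by evaluating Theorem~\ref{thm:rmax} at the apolune operating point, so the plan is a numerical instantiation of \eqref{eq:rmax}. First I fix the velocity-dependent factor. At $v_\mathrm{rel}=10$~m/s the Doppler is $f_d = 2v_\mathrm{rel}f_c/c = 1.8$~kHz. With $\Delta f = 97.7$~kHz from Table~\ref{tab:params}, the normalized offset is $f_d/\Delta f\approx 0.018$. Proposition~\ref{prop:ici} then gives $\mathrm{sinc}^2(0.018)\approx 0.999$, so $\mathcal{G}\approx\eta_\mathrm{impl}=0.56$, which is $-2.5$~dB. Because the ICI factor is essentially unity here, Proposition~\ref{prop:cpi} is needed only to confirm that range migration stays within one cell at this velocity.

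Next I assemble the $\sigma$-independent part of the link budget in decibels:
\begin{itemize}
\item $\mathrm{EIRP}=63.8$~dBW;
\item $G_r=48.4$~dBi;
\item $\lambda^2\approx -39.1$~dB for $\lambda=11.1$~mm;
\item $NM = 1024\cdot 64$, about $48.2$~dB;
\item $K^\kappa = 16^{0.85}$, about $10.2$~dB;
\end{itemize}
all divided by
\begin{itemize}
\item $(4\pi)^3$, about $33.0$~dB;
\item $k_BT_\mathrm{sys}B = -228.6+23.0+80.0=-125.6$~dBW;
\item $\gamma_\mathrm{th}=21.1$~dB, from $\ln(10^{-6})/\ln(0.9)-1$.
\end{itemize}
This gives a constant $C$ with $R_\mathrm{max}=(C\sigma)^{1/4}$.

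For each representative debris diameter $d$ I use the optical-regime RCS $\sigma=\pi(d/2)^2$. This is justified because $\pi d/\lambda\ge 85$ for $d\ge0.3$~m, as shown in Section~\ref{sec:system_model}. Substituting fills the table entries, and the values scale as $R_\mathrm{max}\propto d^{1/2}$, since $\sigma\propto d^2$ and the range depends on $\sigma^{1/4}$. As a sanity check, the 1~m entry should land near the 700~km figure quoted in Fig.~\ref{fig:system} and the abstract. The same entries can be cross-checked against Fig.~\ref{fig:snr_advantage}(a), where the SNR curve at $K=16$ should cross the 21.1~dB threshold at the tabulated ranges.

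The main obstacle is consistency of the dB bookkeeping rather than any conceptual step. Two issues need care.
\begin{itemize}
\item \textbf{Noise normalization.} The SNR in \eqref{eq:snr} uses $B$ in the denominator together with a gain $NM$ in the numerator. I must not double-count the per-subcarrier normalization from Theorem~\ref{thm:fim}, where $\gamma=NM\mathcal{G}\gamma_\mathrm{sc}$.
\item \textbf{Integration exponent.} $K^\kappa$ is a fitted approximation that is accurate only to within 1~dB. The tabulated ranges are therefore accurate to roughly $\pm6\%$, which is 1~dB raised to the $1/4$ power.
\end{itemize}
I would also propagate the $\pm4$~dB RCS uncertainty as a $\pm1$~dB range band, scaling as $\sigma^{1/4}$, so the table entries are stated with that tolerance.
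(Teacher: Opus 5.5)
Your route is the paper's. The corollary has no separate proof: it is \eqref{eq:rmax} evaluated with $\mathcal{G}\approx\eta_\mathrm{impl}$, $NM=1024\cdot 64$, $K^\kappa=16^{0.85}$ and $\sigma=\pi(d/2)^2$. Your decibel budget is right: it gives $10\log_{10}C\approx 200.5$~dB, hence $R_\mathrm{max}\approx 97$~km for $d=1$~m, and the other rows follow from $R_\mathrm{max}\propto d^{1/2}$ to within rounding.

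The step that fails is your sanity check. The 1~m entry of Table~\ref{tab:rmax} is 97~km, not $\approx 700$~km, and your own numbers will show this. The table is the \emph{snapshot} configuration: $M=64$ is held fixed ($T_\mathrm{CPI}=0.74$~ms) with $K=16$, i.e.\ only about 12~ms of echo data. The 700~km quoted in Fig.~\ref{fig:system} and the abstract is instead the \emph{session-level} range. That figure uses $T_\mathrm{obs}=60$~s, $\rho^*=0.6$, the per-phase $T_\mathrm{CPI}^*(\theta)$ and $K^*(\theta)$ of Propositions~\ref{prop:cpi}--\ref{prop:alloc}, and the beam-dwell fixed point. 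The two differ by roughly $34$~dB in SNR, since $(700/97)^4\approx 2.7\times 10^3$. Taken at face value, your check would push you to ``repair'' the budget by importing $T_\mathrm{CPI}^*=0.15$~s or a $K$ derived from $T_\mathrm{obs}$, and you would then no longer reproduce the table. The correct cross-checks are the 97~km threshold crossing in Fig.~\ref{fig:snr_advantage}(a), which you also cite, and the $P_d=0.9$ point in Fig.~\ref{fig:mc_pd}(a). You also leave the Warning column unaddressed. It is simply $R_\mathrm{max}/v_\mathrm{rel}$ at $v_\mathrm{rel}=10$~m/s, e.g.\ $97\,\mathrm{km}/(10\,\mathrm{m/s})\approx 162$~min.
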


\begin{table}[!t]
\centering
\caption{Maximum Detection Range at Apolune ($v_\mathrm{rel}=10$~m/s,
Swerling~I, $K=16$ CPIs)}
\label{tab:rmax}
\begin{tabular}{lcccc}
\hline
\textbf{Target} & \textbf{Size} & \textbf{RCS} & \textbf{$R_\mathrm{max}$} & 
\textbf{Warning} \\
 & [m] & [dBsm] & [km] & [min] \\
\hline
Fragment & 0.3 & $-11.5$ & 53 & 88 \\
Object & 0.5 & $-7.1$ & 69 & 115 \\
Rocket body & 1.0 & $-1.0$ & 97 & 162 \\
Upper stage & 2.0 & $+5.0$ & 137 & 228 \\
Spent stage & 5.0 & $+12.9$ & 217 & 362 \\
\hline
\end{tabular}
\end{table}

\begin{proposition}[CPI Configuration]
\label{prop:cpi}
Given a total observation window $T_\mathrm{obs}$ during a relay 
session, the CPI duration that maximizes detection SNR is
\begin{equation}
T_\mathrm{CPI}^*(\theta) = 
\min\!\left(\frac{c}{2B \cdot v_\mathrm{rel}(\theta)}, \; 
T_\mathrm{obs}\right)
\label{eq:Tcpi_opt}
\end{equation}
i.e., each CPI is as long as possible before range migration 
exceeds one resolution cell $\Delta R = c/(2B)$.
The corresponding optimal number of CPIs is 
$K^*(\theta) = \lfloor T_\mathrm{obs}/T_\mathrm{CPI}^*(\theta) \rfloor$.
\end{proposition}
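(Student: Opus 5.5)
The plan is to write the detection SNR available in a window $T_\mathrm{obs}$ as a function of the CPI length, and show that it increases monotonically in $T_\mathrm{CPI}$ up to the range-migration limit. The maximizer is then the boundary value in \eqref{eq:Tcpi_opt}.

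\textbf{Step 1: fix the trade-off.} Partition $T_\mathrm{obs}$ into $K=\lfloor T_\mathrm{obs}/T_\mathrm{CPI}\rfloor$ CPIs of $M=T_\mathrm{CPI}/T_\mathrm{sym}$ symbols each. By \eqref{eq:snr}, the single-CPI SNR is linear in $M$, hence linear in $T_\mathrm{CPI}$. By Theorem~\ref{thm:rmax}, non-coherent combination of $K$ CPIs contributes an effective gain $K^\kappa$ with $\kappa\approx 0.85<1$. Ignoring the floor, the effective SNR therefore scales as
\begin{equation*}
T_\mathrm{CPI}\cdot (T_\mathrm{obs}/T_\mathrm{CPI})^\kappa = T_\mathrm{obs}^\kappa\, T_\mathrm{CPI}^{1-\kappa},
\end{equation*}
which is strictly increasing in $T_\mathrm{CPI}$. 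This holds only while $\mathcal{G}$ and the coherent gain $NM$ are actually realized.

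\textbf{Step 2: identify the limit on coherent integration.} The 2D-FFT model behind Proposition~\ref{prop:ici} assumes the target stays within one range cell $\Delta R=c/(2B)$ over the CPI. The range drift is $v_\mathrm{rel}(\theta)T_\mathrm{CPI}$. Once it exceeds $\Delta R$, the energy spreads over several range bins, and the coherent gain along the slow-time dimension stops growing linearly with $M$; it saturates at roughly $\Delta R/(v_\mathrm{rel}T_\mathrm{sym})$. Beyond this point the per-CPI SNR is essentially constant in $T_\mathrm{CPI}$, while $K$ decreases. The effective SNR then scales like $K^\kappa$ and is decreasing. The objective is thus maximized at $T_\mathrm{CPI}=c/(2Bv_\mathrm{rel}(\theta))$. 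If that value exceeds $T_\mathrm{obs}$, the window is the binding constraint, giving the $\min$. The expression for $K^*$ follows by counting how many whole CPIs fit in the window.

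\textbf{Main obstacle.} The hard step is making the saturation claim of Step 2 rigorous without a Keystone correction. I would first model the migrated echo's slow-time response in the matched range bin as a truncated window of length $\approx\Delta R/v_\mathrm{rel}$. I would then argue that the peak coherent gain is bounded by that length. Monotonicity in each regime, together with continuity at the breakpoint, then shows the maximum occurs there.

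The floor in $K$ and the approximate $K^\kappa$ law only make the result near-optimal, to within one CPI. I would state this explicitly rather than claim exact optimality.
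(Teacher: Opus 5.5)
Your proposal is correct and follows essentially the same route as the paper. The paper's proof notes that in the migration-free regime $\gamma_\mathrm{det} \propto M K^\kappa \propto K^{\kappa-1} = K^{-0.15}$, which is equivalent to your $T_\mathrm{obs}^\kappa T_\mathrm{CPI}^{1-\kappa}$ scaling, so fewer and longer CPIs are preferred, and it then simply asserts that coherent gain saturates beyond the range-migration limit. Your Step~2 and your caveats about the floor in $K$ and the fitted $K^\kappa$ law spell out what the paper leaves implicit; the upper bound on post-migration per-CPI gain that you plan to prove is already enough, because with $K$ non-increasing it caps the total SNR at its value at the breakpoint.
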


\begin{IEEEproof}
Within the range-migration-free regime 
($v_\mathrm{rel} T_\mathrm{CPI} \leq c/(2B)$), 
$\gamma_\mathrm{det} \propto M K^\kappa \propto 
K^{\kappa-1} = K^{-0.15}$, so fewer, longer CPIs are 
preferred. Beyond the migration limit, coherent gain 
saturates, yielding \eqref{eq:Tcpi_opt}.
\end{IEEEproof}

At apolune ($v_\mathrm{rel} = 10$~m/s), the range migration limit
gives $T_\mathrm{CPI}^* = 0.15$~s, allowing long coherent 
integration. At perilune ($v_\mathrm{rel} = 500$~m/s), 
$T_\mathrm{CPI}^* = 3.0$~ms, requiring many short CPIs and 
extensive non-coherent integration.
This velocity-dependent CPI structure motivates the adaptive 
processing design in Section~IV.

An additional cap arises from finite beam dwell. For a fixed 
beam ($\theta_\mathrm{bw} = 0.62^\circ$), the central diametric 
crossing gives an upper-bound dwell 
$T_\mathrm{dwell} = 2R\tan(\theta_\mathrm{bw}/2)/v_\mathrm{rel}$ 
with $K \leq T_\mathrm{dwell}/T_\mathrm{CPI}$; off-center 
trajectories yield shorter dwells (the average random-chord 
benchmark is $\pi/4 \approx 0.79\times$ the diametric value), 
while active gimbal steering extends dwell by orders of 
magnitude. We retain the diametric value as an optimistic 
fixed-beam reference; mission-stage analysis with realistic 
trajectory geometry will tighten the resulting $K$ cap. Since 
$R_\mathrm{max}$ and $T_\mathrm{dwell}$ are mutually 
dependent, the joint $(K, R_\mathrm{max})$ is resolved by 
fixed-point iteration (typically $<\!10$ steps).

\begin{figure*}[!t]
\centering
\includegraphics[width=0.85\textwidth]{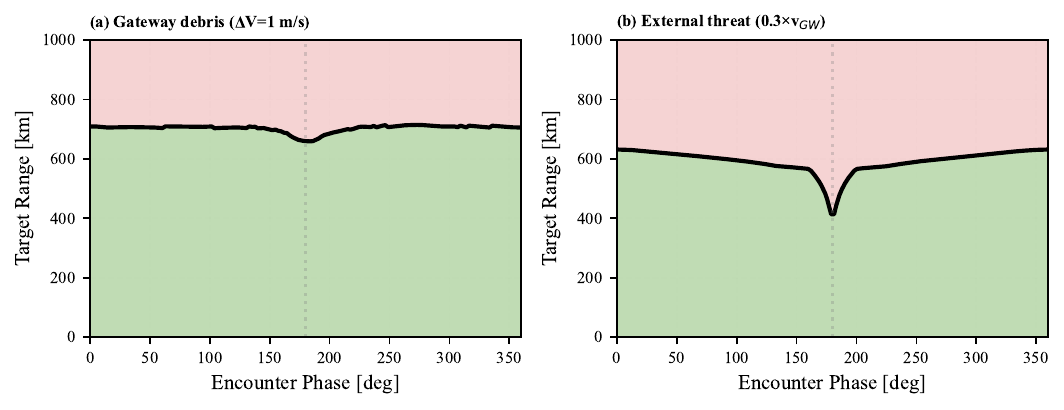}
\caption{Detection feasibility (green: 
$\mathrm{SNR}\geq\gamma_\mathrm{th}$; red: sub-threshold) 
over the encounter-phase / target-range plane for 
(a)~Gateway debris ($\Delta v=1$~m/s, CR3BP) and 
(b)~external threats ($v_\mathrm{rel}=0.3 v_\mathrm{GW}(\theta)$). 
Black contour: $R_\mathrm{max}(\theta)$. 
$K=K^*(\theta)$, $\rho^*=0.6$.}
\label{fig:coverage}
\end{figure*}

\section{Velocity-Adaptive Processing and Resource Allocation}
\label{sec:design}
\begin{figure}[!t]
\centering
\includegraphics[width=\columnwidth]{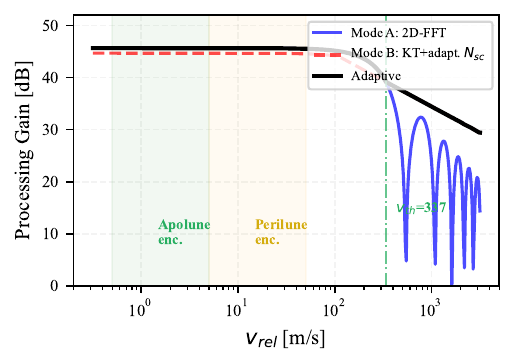}
\caption{Coherent processing gain vs.\ relative velocity 
for Mode~A (2D-FFT), Mode~B (KT + adaptive $N_\mathrm{sc}$), 
and the adaptive envelope. Mode-switching threshold 
$v_\mathrm{th}=337$~m/s. Shaded bands: CR3BP apolune and 
perilune encounter regimes; all Gateway debris falls in 
the flat-gain region.}
\label{fig:processing_gain}
\end{figure}

The CRB analysis in Section~\ref{sec:crb} reveals that sensing 
performance degrades near perilune due to ICI from high relative 
velocity.
As shown in Section~\ref{sec:results}, Gateway's own operational 
debris operates in the low-velocity regime 
($v_\mathrm{rel} < 50$~m/s) where standard processing suffices.
The adaptive design developed below targets the higher-velocity 
regime ($v_\mathrm{rel} > 100$~m/s) anticipated from 
non-cooperative objects as cislunar traffic grows 
\cite{Barakat2024}.
This section addresses the degradation through two mechanisms:
velocity-adaptive waveform processing 
(Section~\ref{sec:mode_ab}--\ref{sec:switching}) and 
orbit-adaptive resource allocation 
(Section~\ref{sec:allocation}).

\subsection{Dual-Mode Processing}
\label{sec:mode_ab}

\emph{Mode~A} (standard 2D-FFT \cite{Braun2014}) yields 
$G_A = NM \cdot \mathcal{G}(v_\mathrm{rel})$; at apolune 
($v_\mathrm{rel} \approx 1$--$10$~m/s), 
$\mathcal{G} \approx \eta_\mathrm{impl}$ and the full gain 
is available.
\emph{Mode~B} addresses two degradation mechanisms at high 
$v_\mathrm{rel}$: range migration (target traverses multiple 
range bins during one CPI) and ICI (signal energy 
redistributed across adjacent subcarriers by the factor 
$\mathrm{sinc}^2(f_d/\Delta f)$).

Range migration is corrected by the Keystone Transform (KT) 
\cite{Perry1999} ($t' = t f_c/(f_c+f)$, $\sim$1~dB loss).
ICI is mitigated by reducing subcarriers to 
$N_\mathrm{sc}^{(B)}$, widening $\Delta f^{(B)} = 
B/N_\mathrm{sc}^{(B)}$ and reducing $f_d/\Delta f^{(B)}$.
The optimal subcarrier count satisfies the ICI tolerance 
condition:
\begin{equation}
N_\mathrm{sc}^{(B)}(v_\mathrm{rel}) = 
\max\!\left(1,\;\min\!\left(\left\lfloor 
\frac{0.2\,B}{2 v_\mathrm{rel} f_c / c} 
\right\rfloor, \; N\right)\right)
\label{eq:Nsc_opt}
\end{equation}
which ensures $f_d/\Delta f^{(B)} < 0.2$, corresponding to 
less than 0.5~dB ICI loss.
The tradeoff is reduced range resolution and coherent range 
compression gain: $N_\mathrm{sc}^{(B)} < N$.

The Mode~B processing gain is
\begin{equation}
G_B = N_\mathrm{sc}^{(B)} \cdot M \cdot 
\mathcal{G}^{(B)}(v_\mathrm{rel}) \cdot \eta_\mathrm{KT}
\cdot \eta_\mathrm{impl}
\label{eq:G_B}
\end{equation}
where $\mathcal{G}^{(B)} = 
\mathrm{sinc}^2(f_d/\Delta f^{(B)})$ uses the wider subcarrier 
spacing, and $\eta_\mathrm{KT} \approx 0.8$ ($-1$~dB) accounts 
for KT interpolation loss.

\subsection{Optimal Mode Switching}
\label{sec:switching}

The adaptive processor selects the mode that yields the higher 
processing gain:
\begin{equation}
G^*\!(v_\mathrm{rel}) = \max\{G_A, \, G_B\}.
\label{eq:G_star}
\end{equation}
Since $G_A$ decreases monotonically with $v_\mathrm{rel}$ 
(due to $\mathcal{G} \to 0$) while $G_B$ decreases more slowly
(due to reduced $N_\mathrm{sc}^{(B)}$ compensating the ICI loss),
a unique crossover velocity $v_\mathrm{th}$ exists where 
$G_A(v_\mathrm{th}) = G_B(v_\mathrm{th})$.
For the HLCS parameters in Table~\ref{tab:params}, numerical 
evaluation yields $v_\mathrm{th} \approx 337$~m/s.

The gain recovery from Mode~B is moderate: at 
$v_\mathrm{rel} = 500$~m/s, Mode~B provides 37.5~dB processing 
gain versus 24.1~dB for Mode~A, recovering approximately 3~dB 
in detection range.
At $v_\mathrm{rel} > 2{,}000$~m/s, the required reduction in 
$N_\mathrm{sc}^{(B)}$ is so severe that Mode~B offers no 
improvement over Mode~A.
This result confirms that the NRHO velocity structure, rather 
than signal processing limitations, fundamentally determines the 
sensing capability near perilune.

\begin{remark}[Slow-Time Doppler Ambiguity]
\label{rem:doppler_ambig}
With OFDM symbol duration $T_\mathrm{sym}\!\approx\! 11.5$~$\mu$s, 
the slow-time Doppler sampling rate $1/T_\mathrm{sym}\!\approx\! 
86.8$~kHz yields an unambiguous Doppler interval 
$|f_d|<43.4$~kHz, corresponding to an unambiguous radial 
velocity $|v_\mathrm{rel}|<c/(4f_c T_\mathrm{sym})\!\approx\! 
241$~m/s. The Mode~A/Mode~B analyses are ambiguity-free in 
this regime, which fully covers Gateway operational debris 
($v_\mathrm{rel}<50$~m/s). The external-threat regime up to 
$500$~m/s exceeds the slow-time Nyquist limit; de-aliasing 
in this regime requires standard auxiliary techniques such 
as multi-CPI phase unwrapping, staggered symbol timing, or 
prior-bounded track-aided ambiguity resolution. Detailed 
implementation is deferred to mission-stage processing 
design and does not affect the SNR-limited range conclusions 
Section~\ref{sec:results}.
\end{remark}

\subsection{Orbit-Adaptive Sensing-Communication Allocation}
\label{sec:allocation}

We formulate the resource allocation problem as follows.
During each relay session of duration $T_\mathrm{obs}$, the 
receiver alternates between communication reception and sensing 
echo capture in a TDD manner (Remark~\ref{rem:sic}).
Let $\rho(\theta) \in [0,1]$ denote the fraction of the session 
allocated to sensing at orbital phase $\theta$.
The design variables at each phase are $\rho(\theta)$, 
$T_\mathrm{CPI}(\theta)$, and $N_\mathrm{sc}(\theta)$.

\begin{proposition}[Sequential Sensing-Communication Allocation]
\label{prop:alloc}
The max-min detection range problem
\begin{equation}
\max_{\rho(\theta),\, T_\mathrm{CPI}(\theta),\, 
N_\mathrm{sc}(\theta)} \;\; \min_\theta \; 
R_\mathrm{max}(\theta)
\label{eq:opt}
\end{equation}
subject to communication throughput 
$R_\mathrm{comm}(\theta) = (1-\rho(\theta)) R_\mathrm{full} 
\geq R_\mathrm{min}$ admits a sequential monotone solution:
\begin{align}
\rho^* &= 1 - R_\mathrm{min}/R_\mathrm{full}
\label{eq:rho_opt} \\
T_\mathrm{CPI}^*(\theta) &= 
\min\!\left(\frac{c}{2B v_\mathrm{rel}(\theta)},\; 
T_\mathrm{obs}\right)
\label{eq:Tcpi_alloc} \\
N_\mathrm{sc}^*(\theta) &= N_\mathrm{sc}^{(B)}(v_\mathrm{rel}(\theta))
\label{eq:Nsc_alloc} \\
K^*(\theta) &= 
\left\lfloor \frac{\rho^* T_\mathrm{obs}}{T_\mathrm{CPI}^*(\theta)} 
\right\rfloor
\label{eq:K_alloc}
\end{align}
where $R_\mathrm{full}$ is the link throughput at full 
communication allocation and $R_\mathrm{min}$ is the minimum 
required throughput.
\end{proposition}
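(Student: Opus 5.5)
The plan is to show that the objective in \eqref{eq:opt} is monotone in each design variable. The variables then decouple, and we fix them one at a time: first $\rho$, then $T_\mathrm{CPI}$, then $N_\mathrm{sc}$, then $K$.

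\emph{Step 1: write $R_\mathrm{max}$ in terms of the design variables.} Substitute the Mode~B gain \eqref{eq:G_B}, with the subcarrier count $N_\mathrm{sc}$ in place of $N$, into \eqref{eq:rmax}, and use $M = T_\mathrm{CPI}/T_\mathrm{sym}$ and $K = \lfloor \rho T_\mathrm{obs}/T_\mathrm{CPI}\rfloor$. This gives
\begin{equation}
R_\mathrm{max}^4(\theta) \propto N_\mathrm{sc}\,\mathcal{G}^{(B)}\!\left(v_\mathrm{rel}(\theta);N_\mathrm{sc}\right)\, T_\mathrm{CPI}\, K^\kappa ,
\end{equation}
valid while range migration stays within one cell.

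\emph{Step 2: fix $\rho$.} For fixed $T_\mathrm{CPI}$ and $N_\mathrm{sc}$, the expression depends on $\rho$ only through $K^\kappa$, which is nondecreasing in $\rho$ because the floor is monotone. So every $R_\mathrm{max}(\theta)$ is nondecreasing in $\rho(\theta)$, and hence so is the inner minimum. The throughput constraint $(1-\rho)R_\mathrm{full}\ge R_\mathrm{min}$ gives the upper bound $\rho \le 1-R_\mathrm{min}/R_\mathrm{full}$. The constraint is therefore active at the optimum, which yields \eqref{eq:rho_opt}. Since $\rho$ enters only multiplicatively through the available sensing time $\rho T_\mathrm{obs}$, fixing $\rho^*$ does not change the maximizers in the remaining variables.

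\emph{Step 3: fix $T_\mathrm{CPI}$.} With $\rho^*$ fixed, invoke Proposition~\ref{prop:cpi} with $T_\mathrm{obs}$ replaced by $\rho^* T_\mathrm{obs}$. Inside the migration-free regime, $T_\mathrm{CPI}K^\kappa \approx (\rho^*T_\mathrm{obs})^\kappa T_\mathrm{CPI}^{1-\kappa}$, which increases with $T_\mathrm{CPI}$ because $\kappa<1$. Beyond the migration limit $c/(2Bv_\mathrm{rel})$, the coherent gain saturates while $K$ keeps falling. The maximizer is therefore \eqref{eq:Tcpi_alloc}. I will write $\min(c/(2Bv_\mathrm{rel}),T_\mathrm{obs})$ as in the statement, and note that the second argument should strictly be read as $\rho^* T_\mathrm{obs}$, or equivalently that the bound is nonbinding for the parameters considered. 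Substituting into the definition of $K$ then gives \eqref{eq:K_alloc}.

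\emph{Step 4: fix $N_\mathrm{sc}$.} For fixed $v_\mathrm{rel}$, we need to maximize $N_\mathrm{sc}\,\mathrm{sinc}^2(f_d N_\mathrm{sc}/B)$. This is the step I expect to be the main obstacle. Let $x = f_d N_\mathrm{sc}/B$. The function $x\,\mathrm{sinc}^2(x) = \sin^2(\pi x)/(\pi^2 x)$ has its true maximizer near $x\approx 0.37$, not at the tolerance value $0.2$ used in \eqref{eq:Nsc_opt}. So the rule \eqref{eq:Nsc_alloc} is not the exact stationary point.

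I would handle this by treating the ICI tolerance $f_d/\Delta f^{(B)}<0.2$ as a design constraint that keeps the ICI loss below $0.5$~dB. Under that constraint the objective is increasing in $N_\mathrm{sc}$, so the largest feasible integer is optimal, capped at $N$. This reproduces \eqref{eq:Nsc_opt}.

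Finally, since $\max_\theta$ and $\min_\theta$ separate pointwise (each phase's variables affect only $R_\mathrm{max}(\theta)$ once $\rho^*$ is common), the pointwise maximizers also solve the max-min problem. The only coupling is through the shared constraint, and that is already saturated. This gives the claimed sequential solution.
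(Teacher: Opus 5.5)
Your proposal is the paper's own sequential-monotonicity argument, just with the steps reordered and an explicit pointwise-to-max-min step added, and it is correct at the paper's level of rigor (continuous relaxation of $K$); the paper fixes $N_\mathrm{sc}$ via \eqref{eq:Nsc_opt}, then uses $M K^\kappa \propto T_\mathrm{CPI}^{1-\kappa}$ with $\kappa<1$ to push $T_\mathrm{CPI}$ to the migration limit, then lets the throughput constraint bind on $\rho$. Your two caveats are genuine and the paper's proof skips over them: the CPI cap should really be $\rho^* T_\mathrm{obs}$, and since $x\,\mathrm{sinc}^2(x)$ peaks at $x\approx 0.37$, \eqref{eq:Nsc_alloc} is optimal only once the $0.2$ ICI tolerance is imposed as a constraint, which the paper assumes without comment when it says $N_\mathrm{sc}$ ``is optimized independently by'' \eqref{eq:Nsc_opt}.
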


\begin{IEEEproof}
We optimize each variable sequentially by monotonicity.
First, $N_\mathrm{sc}$ affects only the per-CPI processing 
efficiency $\mathcal{G}$ and is optimized independently by 
\eqref{eq:Nsc_opt}.
Second, for any fixed $\rho$, the CPI duration 
$T_\mathrm{CPI}$ determines both the coherent gain $M$ and 
the number of CPIs $K = \lfloor \rho T_\mathrm{obs}/T_\mathrm{CPI} \rfloor$.
Since $R_\mathrm{max} \propto (M \cdot K^\kappa)^{1/4}$ and 
$M \cdot K^\kappa \propto T_\mathrm{CPI}^{1-\kappa}$ within 
the range-migration-free regime 
(Proposition~\ref{prop:cpi}), $R_\mathrm{max}$ is monotonically 
increasing in $T_\mathrm{CPI}$ for $\kappa < 1$.
The maximum is therefore attained at the range migration limit 
\eqref{eq:Tcpi_opt}, yielding \eqref{eq:Tcpi_alloc}.
Third, with $T_\mathrm{CPI}^*$ fixed, $R_\mathrm{max}$ is 
monotonically increasing in $\rho$ through $K$.
The communication constraint 
$\rho \leq 1 - R_\mathrm{min}/R_\mathrm{full}$ binds at the 
optimum, giving \eqref{eq:rho_opt} and \eqref{eq:K_alloc}.
\end{IEEEproof}

The throughput model $R_\mathrm{comm} = (1-\rho)R_\mathrm{full}$ 
is a first-order TDD scheduling abstraction; adaptive 
modulation/coding, queueing, and latency effects are deferred 
to a communication-focused extension.

For the HLCS, $R_\mathrm{full}(\theta)$ varies from 
$\sim$104 Mbps at apolune to $\sim$116 Mbps at perilune. 
Substituting the worst-case (apolune) value into 
Proposition~\ref{prop:alloc} with $R_\mathrm{min} = 40$~Mbps 
yields the constant $\rho^* = 0.6$ that satisfies the 
throughput constraint at every orbital phase, serving as a 
conservative baseline. This constant policy is uniform across 
the orbit and therefore cannot exploit the orbit-phase 
variation of either the relay link quality 
$R_\mathrm{full}(\theta)$ or the sensing requirement 
$R_\mathrm{target}(\theta)$, which depends on local debris 
encounter density. The following proposition relaxes the 
constant-$\rho$ restriction to exploit both dependencies.

\begin{proposition}[Orbit-Phase-Adaptive Allocation]
\label{prop:adaptive}
Let the relay throughput $R_\mathrm{full}(\theta)$ and minimum 
detection range $R_\mathrm{target}(\theta)$ both depend on orbital 
phase.
The orbit-averaged throughput maximization
\begin{equation}
\max_{\rho(\theta)} \;
\int_0^{2\pi} \!\bigl(1-\rho(\theta)\bigr)\,
R_\mathrm{full}(\theta)\,w(\theta)\,d\theta
\label{eq:opt_adaptive}
\end{equation}
subject to $R_\mathrm{max}\bigl(\theta,\rho(\theta)\bigr) 
\geq R_\mathrm{target}(\theta)$ for all $\theta$ and 
$0 \leq \rho(\theta) \leq 1$, where 
$w(\theta)$ is the dwell-time weight extracted from the CR3BP 
reference trajectory (approximately $\propto r(\theta)^2$),
admits the following pointwise solution under continuous 
relaxation of $K$ and non-binding beam dwell
\begin{equation}
\rho^*\!(\theta) = 
\frac{T_\mathrm{CPI}^*(\theta)}{T_\mathrm{obs}} \cdot
\left(\frac{R_\mathrm{target}(\theta)^4}
{C(\theta)}\right)^{\!\!1/\kappa}
\label{eq:rho_adaptive}
\end{equation}
where $C(\theta) = \mathrm{EIRP} \cdot G_r \lambda^2 \sigma 
\cdot N_\mathrm{sc}^*\!(\theta) M^*\!(\theta) 
\mathcal{G}^*\!(\theta) /
\bigl[(4\pi)^3 k_B T_\mathrm{sys} B \gamma_\mathrm{th}\bigr]$
is the single-CPI sensing figure of merit at phase~$\theta$.
\end{proposition}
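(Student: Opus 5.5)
The plan is to exploit the fact that the objective \eqref{eq:opt_adaptive} is an integral with a pointwise integrand. It is also linear and decreasing in $\rho(\theta)$, and the constraints couple $\rho$ only at the same phase $\theta$. Since $R_\mathrm{full}(\theta)>0$ and $w(\theta)\geq 0$, maximizing the integral reduces to minimizing $\rho(\theta)$ separately at each $\theta$, subject to $R_\mathrm{max}(\theta,\rho(\theta))\geq R_\mathrm{target}(\theta)$ and $0\le\rho(\theta)\le 1$. No Lagrangian coupling across phases is needed. I would state this decomposition first, noting that measurability of the pointwise minimizer follows from its closed form.

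The second step is to express $R_\mathrm{max}$ as an explicit monotone function of $\rho$. I would take the per-CPI processing choices $N_\mathrm{sc}^*(\theta)$ and $T_\mathrm{CPI}^*(\theta)$ (hence $M^*(\theta)$) as fixed by the sequential argument of Proposition~\ref{prop:alloc}, since neither depends on $\rho$. The sensing time is $\rho(\theta)T_\mathrm{obs}$. Under the continuous relaxation, $K(\theta)=\rho(\theta)T_\mathrm{obs}/T_\mathrm{CPI}^*(\theta)$, and non-binding beam dwell means the $T_\mathrm{dwell}$ cap on $K$ is inactive. Substituting into Theorem~\ref{thm:rmax}, with $N M\mathcal{G}$ replaced by the adaptive single-CPI gain $N_\mathrm{sc}^*M^*\mathcal{G}^*$, gives
\[
R_\mathrm{max}(\theta,\rho)^4 = C(\theta)\,\bigl(\rho\,T_\mathrm{obs}/T_\mathrm{CPI}^*(\theta)\bigr)^{\kappa}.
\]
This is strictly increasing in $\rho$ because $\kappa>0$.

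The third step uses monotonicity: the feasible set at each $\theta$ is an interval $[\rho_{\min}(\theta),1]$, and the minimal feasible $\rho$ makes the range constraint active. Setting $R_\mathrm{max}^4=R_\mathrm{target}^4$ and solving gives $\rho T_\mathrm{obs}/T_\mathrm{CPI}^*=(R_\mathrm{target}^4/C)^{1/\kappa}$, which is exactly \eqref{eq:rho_adaptive}. I would add the caveat that this is the solution whenever the right-hand side lies in $[0,1]$. If it exceeds $1$, the phase is infeasible even with full sensing allocation, and the statement implicitly assumes feasibility.

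The main obstacle is justifying that $T_\mathrm{CPI}^*$ and $N_\mathrm{sc}^*$ stay optimal when $\rho$ is the minimized variable rather than maximized. Since they maximize $R_\mathrm{max}$ for any fixed $\rho$ (by the monotonicity argument of Proposition~\ref{prop:alloc}, using $\kappa<1$), any other choice would require a larger $\rho$ to meet the target. They are therefore also optimal for the throughput problem. I would make this exchange argument explicit, and note that the relaxation of the floor on $K$ and the dwell cap are exactly what keep this argument clean.
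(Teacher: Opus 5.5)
Your proposal is correct and matches the paper's proof step for step: the separable objective decomposes pointwise; $(1-\rho)R_\mathrm{full}$ is decreasing in $\rho$ while $R_\mathrm{max}$ is increasing in $\rho$ through $K=\rho T_\mathrm{obs}/T_\mathrm{CPI}^*$; the detection constraint therefore binds; and inverting $R_\mathrm{max}^4 = C(\theta)K^\kappa$ gives the formula. Your feasibility caveat (the closed form is valid only when it lies in $[0,1]$) and your exchange argument that $T_\mathrm{CPI}^*$ and $N_\mathrm{sc}^*$ stay optimal when $\rho$ is minimized add rigor that the paper's proof leaves implicit.
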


\begin{IEEEproof}
Since the objective \eqref{eq:opt_adaptive} is separable across 
$\theta$ (no coupling between different orbital phases) and 
$R_\mathrm{max}$ at each phase depends only on the local 
$\rho(\theta)$, the problem decomposes into independent 
pointwise optimizations.
At each $\theta$, the throughput 
$(1-\rho)\,R_\mathrm{full}(\theta)$ is monotonically decreasing 
in $\rho$, while $R_\mathrm{max}(\theta,\rho)$ is monotonically 
increasing in $\rho$ through $K = \rho T_\mathrm{obs}/T_\mathrm{CPI}^*$.
The unique optimum therefore occurs where the detection 
constraint binds: 
$R_\mathrm{max}(\theta,\rho^*) = R_\mathrm{target}(\theta)$.
Substituting $R_\mathrm{max}^4 = C(\theta)\cdot K^\kappa$ with 
$K = \rho T_\mathrm{obs}/T_\mathrm{CPI}^*$ and solving for 
$\rho$ yields \eqref{eq:rho_adaptive}.
\end{IEEEproof}

The closed form \eqref{eq:rho_adaptive} assumes continuous 
relaxation of $K$ and non-binding beam dwell 
(Section~\ref{sec:crb}).
Numerical evaluation of the exact discrete implementation 
(integer $K$, beam-dwell-constrained fixed-point iteration) 
shows that the orbit-averaged throughput is within 2\% of 
the relaxed solution, confirming that the continuous 
relaxation is tight for the HLCS parameters in 
Table~\ref{tab:params}.

Since $\rho^*(\theta) \propto T_\mathrm{CPI}^*(\theta)$ 
from \eqref{eq:rho_adaptive} and $T_\mathrm{CPI}^*$ 
decreases with velocity (Proposition~\ref{prop:cpi}), 
both the perilune regime (many short CPIs) and the apolune 
regime (few long CPIs) require only modest $\rho$, yielding 
$\bar{\rho} \approx 0.19$ versus $\rho^* = 0.6$ under the 
constant allocation (Section~\ref{sec:adaptive_results}).

In practice, near-range sensing additionally exploits TDD guard 
periods that would otherwise be idle (Remark~\ref{rem:sic}), 
though the primary sensing gain comes from the dedicated 
sensing slots.

The computational overhead is modest: mode selection is a 
single sinc evaluation, KT adds $O(NM\log M)$, and the 
$K$--$R_\mathrm{max}$ solver converges in fewer than 10 
iterations---all feasible on space-qualified FPGAs.

\section{Debris Encounter Model and Simulation Setup}
\label{sec:setup}

\begin{figure*}[!t]
\centering
\includegraphics[width=0.96\textwidth]{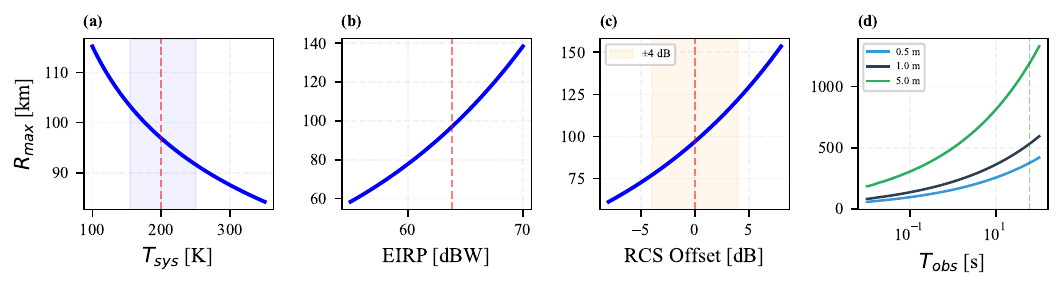}
\caption{Sensitivity of snapshot detection range to (a) system noise temperature $T_\mathrm{sys}$, (b) EIRP, (c) RCS uncertainty ($\pm 4$~dB from NASA SEM), and (d) observation time $T_\mathrm{obs}$ for three target sizes at $v_\mathrm{rel} = 10$~m/s with beam dwell constraint. Red dashed lines indicate nominal values.}
\label{fig:sensitivity}
\end{figure*}

\subsection{CR3BP-Based Recontact Trajectory Model}
\label{sec:cr3bp}

The NRHO's marginal instability ($\nu \approx 1.3$ 
\cite{Zimovan2017}) keeps separated objects ($\Delta v = 
1$--5~m/s) in the vicinity for 30--100~days, with recontact 
every $\sim$3.3~days near perilune \cite{Davis2019}.
Both trajectories are propagated in the Earth--Moon CR3BP 
($\mu = 0.01215$) \cite{Szebehely1967} using a 
differentially corrected 9:2 NRHO \cite{Williams2017}, 
integrated (DOP853, $10^{-12}$ tolerance) for 30--45~days.
Solar radiation pressure is neglected: for compact debris 
(area-to-mass ratio $< 0.01$~m$^2$/kg), SRP perturbation is 
below 1\% of the CR3BP acceleration at lunar distance.
For high-AMR fragments (e.g., multilayer insulation), SRP can 
significantly alter recontact timing and is deferred to 
future work.
Davis et al.~\cite{Davis2019} validated CR3BP recontact 
predictions against full ephemeris for the first several 
revolutions.

A campaign over 24 separation phases, 3 directions (VNB frame), 
and $\Delta v\in\{1,5\}$~m/s shows that perilune encounters 
amplify $v_\mathrm{rel}$ by up to $50\times\Delta v$ while 
apolune encounters yield $v_\mathrm{rel}\approx\Delta v$; 
normal-direction separation is the most persistent, and 
$\Delta v\geq 5$~m/s in the velocity direction eliminates 
recontact. The resulting $v_\mathrm{rel}(\theta)$ profile feeds 
the detection analysis in Section~\ref{sec:results}.

\subsection{Simulation Parameters}

The simulation parameters are summarized in 
Table~\ref{tab:sim_params}.
We define three debris encounter scenarios based on the CR3BP 
recontact data:
\begin{enumerate}
\item \textbf{Recent separation} ($t_\mathrm{sep} < 3$~days):
$v_\mathrm{rel} = 1$--$5$~m/s. The debris has not yet completed 
its first half-revolution and remains close to the Gateway with 
low relative velocity.
\item \textbf{First recontact} 
($t_\mathrm{sep} \approx 7$--$10$~days): 
$v_\mathrm{rel} = 5$--$50$~m/s.
The debris returns to the Gateway vicinity after its first 
perilune passage, with amplified relative velocity.
\item \textbf{Long-term revisit} 
($t_\mathrm{sep} \approx 20$--$30$~days): 
$v_\mathrm{rel} = 1$--$30$~m/s.
Multiple recontact events with variable velocity, reflecting the 
complex interplay between unstable-manifold divergence and 
periodic orbit structure.
\end{enumerate}

\begin{table}[!t]
\centering
\caption{Simulation Parameters}
\label{tab:sim_params}
\begin{tabular}{llll}
\hline
\textbf{Parameter} & \textbf{Value} & \textbf{Source} & \textbf{Type} \\
\hline
Carrier frequency $f_c$ & 27 GHz & \cite{Dellandrea2022} & Doc. \\
Wavelength $\lambda$ & 11.1 mm & & Derived \\
Bandwidth $B$ & 100 MHz & \cite{Johnson2021} & Doc. \\
Antenna diameter $D$ & 1.25 m & \cite{Dellandrea2022} & Doc. \\
Antenna gain $G_t = G_r$ & 48.4 dBi & $\eta = 0.55$ & Derived \\
TX power $P_t$ & 35 W (15.4 dBW) & \cite{Dellandrea2022} & Doc. \\
EIRP & 63.8 dBW & & Derived \\
System temp.\ $T_\mathrm{sys}$ & 200 K & \cite{KaBandLNA2001} & Doc. \\
Subcarriers $N$ & 1024 & & Assumed \\
Symbols/CPI $M$ & 64 & & Assumed \\
Subcarrier spacing $\Delta f$ & 97.7 kHz & & Derived \\
CPI duration $T_\mathrm{CPI}$ & 0.74 ms & & Derived \\
\hline
\multicolumn{4}{l}{\footnotesize Doc.\ = mission-documented; 
Derived = calculated from documented values;} \\
\multicolumn{4}{l}{\footnotesize Assumed = author-selected 
for sensing analysis.} \\
\end{tabular}
\end{table}

All sensing performance metrics (CRB, $R_\mathrm{max}$, warning 
time) are evaluated for a target with diameter 1~m 
($\sigma = \pi/4 \approx -1$~dBsm, optical regime at 27~GHz; 
see Section~\ref{sec:system_model}).
Results for other target sizes are obtained by scaling 
$R_\mathrm{max} \propto \sigma^{1/4}$.
The self-consistent $K$--$R_\mathrm{max}$ relationship, 
accounting for beam dwell time, is resolved iteratively as 
described in Section~\ref{sec:crb}.

\section{Numerical Results}
\label{sec:results}

This section evaluates the proposed framework using the CR3BP 
debris encounter model and HLCS parameters from 
Sections~\ref{sec:setup} and \ref{sec:system_model}.
All results use the primary operating point of 
$T_\mathrm{obs} = 60$~s with $\rho^* = 0.6$ unless stated 
otherwise.

\subsection{Cislunar Environmental Advantage}

Fig.~\ref{fig:snr_advantage}(a) plots the detection SNR versus target range at apolune 
for four debris diameters (0.5, 1, 2, 5~m) and a Swerling~I 
threshold of 21.1~dB.
The 1~m target crosses the detection threshold at 97~km 
(snapshot, $K = 16$), while the 5~m spent stage remains 
detectable beyond 200~km.
The ground-based Ka-band reference curve for a 1~m target, 
shifted down by 35.6~dB to account for atmospheric absorption, 
ionospheric scintillation, ground clutter, and the elevated 
noise temperature typical of terrestrial Ka-band SSA receivers
(Fig.~\ref{fig:snr_advantage}(b)), crosses the threshold at 
only 12~km, explaining why ground-based cislunar SSA is not 
operationally feasible \cite{NASARadar2023,Barakat2024}. 
The 36~dB advantage translates to a $7.8\times$ improvement 
in detection range for an in-situ Gateway sensor under the 
$R^4$ path loss law.

\subsection{Debris Recontact Velocity Profile}

Fig.~\ref{fig:vrel_bands} shows that apolune encounters yield 
$v_\mathrm{rel} \approx \Delta v$, while perilune encounters 
amplify it by up to $50\times$ (49.5~m/s for $\Delta v = 1$~m/s, 
86.7~m/s for 5~m/s), consistent with 
\cite{Davis2019,Scheuerle2023}.

\subsection{Detection Range and Warning Time}

Fig.~\ref{fig:rmax_warning}(a) compares the maximum detection range along the NRHO 
for four scenarios.
The CR3BP recontact curve for a 1~m target (blue, with shaded 
band spanning median to 95th-percentile $v_\mathrm{rel}$) remains 
between 670 and 710~km across the entire orbit, with only a 
minor dip near perilune.
This confirms that Gateway's own operational debris, with 
$v_\mathrm{rel} \leq 50$~m/s, operates entirely within the 
ICI-free regime.
Three parametric external-threat curves 
($v_\mathrm{rel} = 0.3 \times v_\mathrm{GW}(\theta)$, 
spanning plausible non-cooperative encounter kinematics) for 
target diameters 0.5, 1.0, and 2.0~m show orbit-phase-dependent 
detection range: the 1~m case drops from 630~km at apolune 
to 400~km at perilune (35\% reduction), while the 2~m target 
remains above 630~km throughout.
Mission-specific threat trajectories require detailed conjunction 
analysis beyond the scope of this work.
Fig.~\ref{fig:rmax_warning}(b) translates these ranges into 
warning time.
Gateway debris warning exceeds $10^4$~minutes at apolune and 
2{,}000~minutes at perilune, far above the 3-hour COLA threshold.
The 1~m external threat drops to 12~minutes at perilune but 
remains above the 5-minute emergency threshold.
The snapshot baseline ($K = 16$, dotted) falls below 1~minute 
near perilune, confirming that extended observation is essential.

\subsection{Processing Gain and Mode Selection}

Fig.~\ref{fig:processing_gain} shows that the standard 2D-FFT processor (Mode~A) 
provides the full 45.7~dB coherent gain for 
$v_\mathrm{rel} < 100$~m/s, encompassing all CR3BP recontact 
encounters.
Mode~B (KT with adaptive $N_\mathrm{sc}$) incurs 1~dB overhead 
from KT interpolation but degrades more gracefully at high 
velocity, crossing Mode~A at $v_\mathrm{th} = 337$~m/s.
The adaptive envelope recovers 13~dB over Mode~A at 
$v_\mathrm{rel} = 500$~m/s.
For Gateway debris, adaptation provides no benefit; for external 
threats with $v_\mathrm{rel} > 300$~m/s, Mode~B recovers 
approximately 3~dB in detection range.

The detection range scales as $R_\mathrm{max} \propto d^{1/2}$ 
(from $\sigma \propto d^2$ and $R_\mathrm{max} \propto 
\sigma^{1/4}$); Table~\ref{tab:rmax} gives values for five 
representative target sizes.
The ICI penalty at $v_\mathrm{rel} = 500$~m/s reduces the 
snapshot detection range by approximately $3.5\times$ 
relative to the apolune baseline ($97/28\!\approx\!3.46$, 
consistent with the $21.6$~dB ICI loss).

\subsection{Sensitivity Analysis}

Fig.~\ref{fig:sensitivity} ($K=16$ snapshot) shows 
$R_\mathrm{max}$ scales as $T_\mathrm{sys}^{-1/4}$, 
$\mathrm{EIRP}^{1/4}$, and incurs up to 26\% variation from 
$\pm 4$~dB RCS uncertainty \cite{XuKennedy2019}; extending 
$T_\mathrm{obs}$ from snapshot to 60~s lifts the 1~m range 
from 97 to $\sim$550~km, with growth saturating under 
$K^{0.21}$ scaling and beam dwell.

\subsection{Sensing-Communication Tradeoff}

At $\rho^* = 0.6$, the system achieves 
$R_\mathrm{max} \approx 650$~km at $v_\mathrm{rel}=10$~m/s and 
$\sim$400~km at 500~m/s while maintaining 40~Mbps throughput, 
with the constraint binding at the optimum of 
Proposition~\ref{prop:alloc}. A time-sharing policy at the 
same $\bar{\rho}=0.19$ achieves identical throughput but 
concentrates sensing into $\sim$19\% of the orbit, creating 
coverage gaps; the adaptive allocation in 
Proposition~\ref{prop:adaptive} instead distributes sensing 
across all phases, keeping Gateway debris and external threats 
above the 100~km keep-out sphere \cite{Davis2019} throughout.

\begin{figure*}[!t]
\centering
\includegraphics[width=0.92\textwidth]{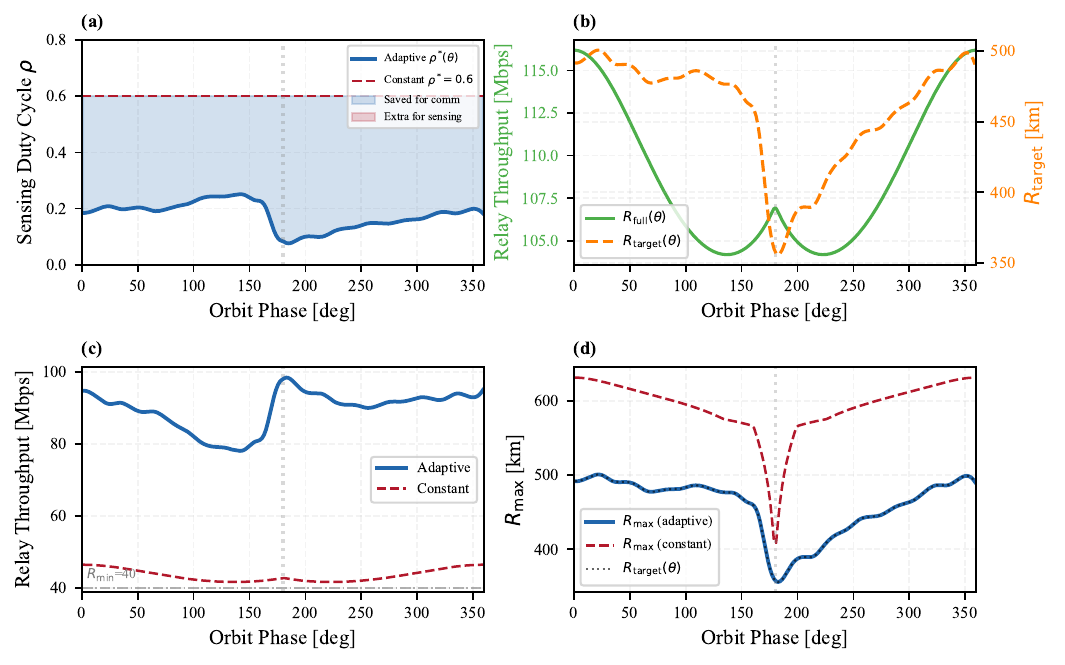}
\caption{Orbit-phase-adaptive allocation 
(Proposition~\ref{prop:adaptive}) vs.\ constant 
$\rho^*=0.6$ baseline. (a)~Adaptive duty cycle 
$\rho^*(\theta)\in[0.08,0.25]$. 
(b)~$R_\mathrm{full}(\theta)$ and risk-weighted 
$R_\mathrm{target}(\theta)$. (c)~Adaptive throughput: 
90~Mbps orbit-averaged vs.\ 44~Mbps. (d)~Both meet 
$R_\mathrm{target}(\theta)$; adaptive uses 69\% less 
sensing time.}
\label{fig:adaptive_rho}
\end{figure*}

\subsection{Orbit-Phase-Adaptive Allocation}
\label{sec:adaptive_results}

To apply Proposition~\ref{prop:adaptive}, the detection 
coverage requirement $R_\mathrm{target}(\theta)$ is 
constructed from the CR3BP recontact encounter density 
$n(\theta)$ (Section~\ref{sec:cr3bp}):
\begin{equation}
R_\mathrm{target}(\theta) = R_\mathrm{base} + 
R_\mathrm{risk}\,\frac{n(\theta)}{n_\mathrm{max}}
\label{eq:R_target}
\end{equation}
where $R_\mathrm{base} = 200$~km is the minimum keep-out 
range and $R_\mathrm{risk} = 300$~km scales the additional 
coverage with local encounter density.
This heuristic coverage profile is a design-policy parameter, 
not a mission-safety-derived constraint; a rigorous derivation 
from conjunction probability and miss-distance distributions 
is deferred to operational mission analysis.
This yields $R_\mathrm{target}$ between 360~km (perilune, 
lowest encounter density) and 500~km (near apolune, highest 
encounter density).
The orbit-averaged duty cycle remains below 0.30 for 
$R_\mathrm{base} \in [150, 250]$~km and 
$R_\mathrm{risk} \in [200, 400]$~km, confirming structural 
robustness of the adaptive allocation.
In all tested combinations, the qualitative conclusion 
holds: the adaptive policy requires less than half the 
duty cycle of the constant allocation while maintaining 
detection coverage at every orbital phase.

Fig.~\ref{fig:adaptive_rho} compares the orbit-phase-adaptive 
allocation from Proposition~\ref{prop:adaptive} with the 
constant $\rho^* = 0.6$ baseline from 
Proposition~\ref{prop:alloc}.
Panel~(a) shows $\rho^*(\theta)$ ranging from 0.08 at 
perilune to 0.25 at intermediate phases 
($\bar{\rho} = 0.19$, less than one-third of the constant 
$\rho = 0.6$), because the constant baseline massively 
over-allocates at apolune where detection is straightforward.
Panel~(b) illustrates the anti-correlated demand structure: 
$R_\mathrm{full}(\theta)$ peaks near perilune while 
$R_\mathrm{target}(\theta)$ peaks at intermediate phases.
Panel~(c) shows that the adaptive policy achieves 90~Mbps 
orbit-averaged throughput (vs.\ 44~Mbps at $\rho = 0.6$).
The best constant $\rho$ satisfying the same constraints at 
all orbital phases is $\rho = 0.26$ (81.9~Mbps); the 
adaptive policy contributes an additional 8.1~Mbps (10\% 
relative to the best constant baseline, or 18\% of the 
46~Mbps total gain over the conservative $\rho = 0.6$ 
baseline) by tightening coverage at every orbital phase 
(Panel~(d)).

The orbit-phase structure of the NRHO thus creates a 
scheduling opportunity unavailable in circular orbits.

\begin{figure*}[!t]
\centering
\includegraphics[width=0.92\textwidth]{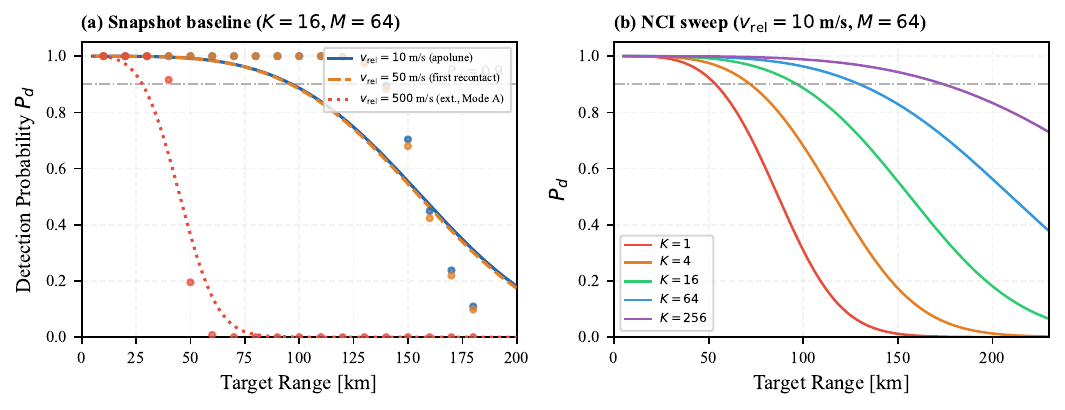}
\caption{Snapshot-baseline detection probability at 
$(N,M,K)=(1024,64,16)$. (a)~$P_d$ vs.\ range at three 
relative velocities; markers: $2\times 10^4$ Monte Carlo 
trials, lines: $P_d=P_{fa}^{1/(1+\bar{\gamma})}$. The 
$v_\mathrm{rel}=500$~m/s case collapses to 
$R_\mathrm{max,det}\approx 28$~km under Mode~A. 
(b)~$K$-sweep at $v_\mathrm{rel}=10$~m/s, $M=64$: 
$R_\mathrm{max,det}$ grows from $\approx 54$~km ($K=1$) 
to $\approx 174$~km ($K=256$).}
\label{fig:mc_pd}
\end{figure*}

\subsection{Monte Carlo Detection Validation}
\label{sec:mc_pd}

Figure~\ref{fig:mc_pd} characterizes the snapshot-baseline
detection performance, i.e., under the fixed configuration
$(N,M,K)=(1024,64,16)$ used in Table~\ref{tab:rmax} before
session-level optimization is applied. This regime isolates the
behavior of Theorem~\ref{thm:rmax} at a single CPI, prior to the
per-phase coherent/non-coherent rebalancing of
Proposition~\ref{prop:cpi}.

Panel~(a) shows $P_d(R)$ for three representative relative
velocities. At $v_\mathrm{rel}=10$~m/s (apolune near-stationary
phase), the theoretical curve matches Monte Carlo simulation
within $0.4$~dB and the $P_d=0.9$ operating point is reached at
$R_\mathrm{max,det}\approx 97$~km, consistent with
Table~\ref{tab:rmax}. At $v_\mathrm{rel}=50$~m/s (first
post-apolune recontact), Doppler-induced ICI is still negligible
and $R_\mathrm{max,det}$ is essentially unchanged. At
$v_\mathrm{rel}=500$~m/s, representative of an external
high-energy threat under Mode~A (standard 2D-FFT) processing,
ICI becomes dominant and $R_\mathrm{max,det}$ collapses to
$\approx 28$~km. This collapse is the quantitative motivation
for the Mode~B (Keystone Transform with adaptive $N_\mathrm{sc}$)
mitigation introduced in Section~\ref{sec:design}; the
corresponding recovery is visible in Fig.~\ref{fig:processing_gain}.

Panel~(b) isolates the effect of the NCI depth $K$ at fixed
$M=64$ and $v_\mathrm{rel}=10$~m/s. As predicted by the
$K^{\kappa}$ scaling of Theorem~\ref{thm:rmax},
$R_\mathrm{max,det}$ grows from approximately $54$~km at $K{=}1$
(single CPI) to $174$~km at $K{=}256$, a $3.2\times$ span. This
confirms the tightness of the closed-form expression across more
than two decades of NCI depth under the snapshot parameters.

The snapshot-baseline ranges in Fig.~\ref{fig:mc_pd} are
deliberately conservative: they fix $M=64$ rather than allowing
$M$ to adapt to the instantaneous relative velocity. When the
full session-level allocation of Proposition~\ref{prop:alloc} is
applied---i.e., $T_\mathrm{CPI}^{*}(\theta)$ and the resulting
$(N_\mathrm{sc},M,K)$ allocation are re-optimized per orbital
phase---the achievable detection ranges extend significantly
further, reaching $\sim 700$~km for Gateway co-moving debris and
$\sim 400$~km for external cislunar threats at perilune, as
shown in Figs.~\ref{fig:rmax_warning} and \ref{fig:outage}.

The close theory--MC agreement across all $K$ implicitly 
validates the fixed $\kappa=0.85$ approximation, which incurs 
$<\!6\%$ variation in $R_\mathrm{max}$ across the relevant 
range. End-to-end simulation from raw echo samples through 
the 2D-FFT/Keystone chain is deferred to implementation-stage 
validation.

\begin{figure*}[!t]
\centering
\includegraphics[width=0.92\textwidth]{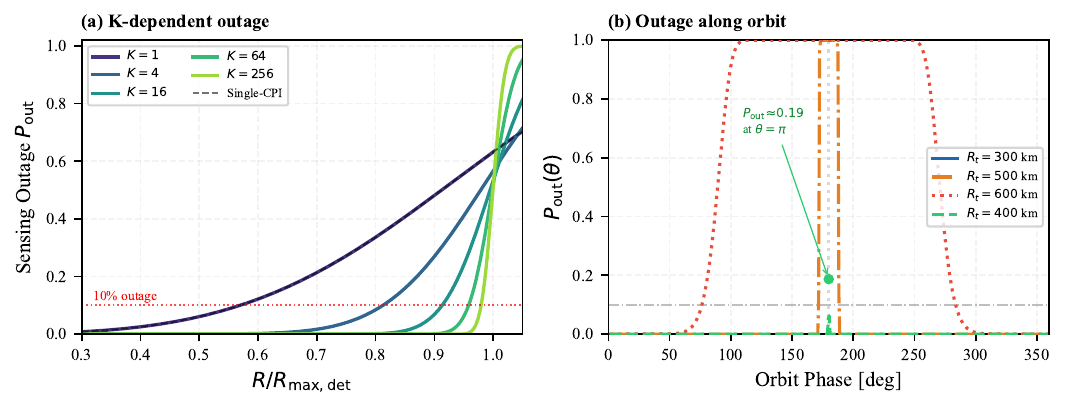}
\caption{Sensing outage under Swerling~I fluctuation
(Proposition~\ref{prop:outage}), session-optimized regime
($K=K^*(\theta)$ from Proposition~\ref{prop:cpi}).
(a)~$P_\mathrm{out}$ vs.\ normalized range for 
$K\in\{1,4,16,64,256\}$; dashed: single-CPI reference.
(b)~$P_\mathrm{out}(\theta)$ along the NRHO for four 
$R_\mathrm{target}$ values. At $K=16$, outage stays below 
10\% for $R_\mathrm{target}\leq 400$~km outside a 
$\pm 10^\circ$ window around perilune, where the 
ICI-reduced $R_\mathrm{max,det}$ for external threats 
raises $P_\mathrm{out}$ to $\approx 0.19$.}
\label{fig:outage}
\end{figure*}

\subsection{Sensing Outage Under RCS Fluctuation}
\label{sec:outage}

The detection ranges derived in Theorem~\ref{thm:rmax} assume 
the mean RCS $\bar{\sigma}$.
Under Swerling~I fluctuation, the instantaneous RCS 
$\sigma \sim \mathrm{Exp}(\bar{\sigma})$ causes 
$R_\mathrm{max}$ to vary from realization to realization.
We define the \emph{sensing outage probability} as the 
probability that the instantaneous detection range falls below 
a specified requirement:

\begin{proposition}[K-CPI Sensing Outage]
\label{prop:outage}
Under Swerling~I RCS fluctuation and non-coherent integration 
over $K$ independent CPIs, the sensing outage probability at 
operational range $R_\mathrm{target}$ and orbital phase 
$\theta$ is
\begin{equation}
P_\mathrm{out}(\theta, K) = 
F_{\Gamma}\!\left(K\cdot\!\left(\frac{R_\mathrm{target}(\theta)}
{R_\mathrm{max,det}(\theta)}\right)^{\!4};\, K,\, 1\right),
\label{eq:outage}
\end{equation}
where $R_\mathrm{max,det}(\theta)$ is the deterministic 
detection range from Theorem~\ref{thm:rmax} computed with 
the mean RCS $\bar{\sigma}$, and 
$F_{\Gamma}(\cdot;\,k,\,1)$ denotes the CDF of the 
Gamma distribution with shape parameter $k$ and unit scale 
(scale parameter fixed at unity throughout this section; 
only the shape parameter $K$ varies).
\end{proposition}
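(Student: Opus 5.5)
The proof treats the outage event as a threshold-crossing event on the integrated RCS and then identifies its distribution as Gamma. Fix $\theta$ and write $\bar{\gamma}_k(R)$ for the mean per-CPI detection SNR at range $R$, from \eqref{eq:snr} with mean RCS $\bar{\sigma}$. The modelling assumption is that across the $K$ independent CPIs the RCS realizations $\sigma_1,\dots,\sigma_K$ are i.i.d.\ $\mathrm{Exp}(\bar{\sigma})$. Slow fluctuation within a CPI holds by the Swerling~I justification of Section~\ref{sec:system_model}. Independence across CPIs is the extra assumption stated in Proposition~\ref{prop:outage}. The instantaneous integrated SNR is then
\[
\gamma_\Sigma = \sum_{k=1}^K \gamma_{\mathrm{inst},k}, \qquad \gamma_{\mathrm{inst},k} = \bar{\gamma}_k(R)\,\sigma_k/\bar{\sigma}.
\]
Normalizing, $X = \sum_k \sigma_k/\bar{\sigma}$ is a sum of $K$ i.i.d.\ unit-mean exponentials, so $X\sim\Gamma(K,1)$ with CDF $F_\Gamma(\cdot;K,1)$. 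This is standard and is the only probabilistic ingredient.

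Next I express the instantaneous detection range in terms of $X$. Using $R^{-4}$ scaling as in Theorem~\ref{thm:rmax}, the deterministic range $R_{\mathrm{max,det}}$ is the range at which the mean integrated SNR (RCS $\bar{\sigma}$ in every CPI, i.e.\ $X$ replaced by its mean $K$) equals the required threshold. I would define the instantaneous range $R_{\mathrm{inst}}$ as the range at which the realized integrated energy meets that same threshold. Since the integrated energy scales as $X\,R^{-4}$ and equals the threshold at $X=K$, $R=R_{\mathrm{max,det}}$, this gives
\[
R_{\mathrm{inst}}^4 = R_{\mathrm{max,det}}^4\,\frac{X}{K}.
\]
Outage is the event $R_{\mathrm{inst}}<R_{\mathrm{target}}(\theta)$, which is equivalent to $X < K\,(R_{\mathrm{target}}/R_{\mathrm{max,det}})^4$. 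Applying the Gamma CDF yields \eqref{eq:outage}.

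The main obstacle is justifying the threshold convention in that middle step. The problem is that the $K^\kappa$ efficiency law of Theorem~\ref{thm:rmax} already averages over the fluctuation, so applying it again to a realized RCS risks double counting. I would resolve this by stating explicitly that outage is defined conditionally on the realized RCS sequence. Under that definition, detection succeeds iff the realized energy sum reaches the level that the mean-RCS sum $K\bar{\sigma}$ would reach at $R_{\mathrm{max,det}}$. The threshold $\gamma_{\mathrm{th}}$ and the factor $K^\kappa$ then cancel in the ratio, and only the ratio $X/K$ survives. Finally I would check two limiting cases. For $K=1$ the formula reduces to $1-\exp\!\left(-(R_{\mathrm{target}}/R_{\mathrm{max,det}})^4\right)$, the single-CPI exponential outage. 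As $K\to\infty$, $X/K\to 1$, so the outage tends to a step function at $R_{\mathrm{target}}=R_{\mathrm{max,det}}$. Together these confirm the interpretation used in Fig.~\ref{fig:outage}.
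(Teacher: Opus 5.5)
Your proposal is correct and follows the paper's proof essentially step for step. You normalize the i.i.d.\ $\mathrm{Exp}(\bar{\sigma})$ per-CPI RCS values into $S_K\sim\mathrm{Gamma}(K,1)$, write the integrated statistic as $\bar{\gamma}(R)\,S_K$, fix $R_\mathrm{max,det}$ by $K\bar{\gamma}(R_\mathrm{max,det})=\gamma_\mathrm{th}^{(K)}$, and use the $R^{-4}$ ratio to turn outage into $\{S_K<K(R_\mathrm{target}/R_\mathrm{max,det})^4\}$; your explicit conditional-on-realized-RCS convention and your limiting-case checks only make explicit what the paper leaves implicit in the proof and in Remark~\ref{rem:outage_limits}.
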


\begin{IEEEproof}
Under Swerling~I, the per-CPI RCS satisfies 
$\sigma_k \sim \mathrm{Exp}(\bar{\sigma})$ independently 
across $k = 1,\ldots,K$. Define the per-CPI nominal SNR 
$\bar{\gamma}(R) \propto R^{-4}$ at the mean RCS; the 
realized per-CPI SNR is then 
$\gamma_k(R) = \bar{\gamma}(R)\,(\sigma_k/\bar{\sigma})$. 
After $K$-pulse non-coherent integration, the test 
statistic is
\begin{equation*}
\Gamma_K(R) = \sum_{k=1}^K \gamma_k(R) = \bar{\gamma}(R)\cdot S_K, 
\quad S_K \triangleq \sum_{k=1}^K \frac{\sigma_k}{\bar{\sigma}}.
\end{equation*}
As a sum of $K$ i.i.d.\ unit-mean exponentials, 
$S_K \sim \mathrm{Gamma}(K, 1)$. The deterministic detection 
range satisfies 
$K\bar{\gamma}(R_\mathrm{max,det}) = \gamma_\mathrm{th}^{(K)}$, 
where $\gamma_\mathrm{th}^{(K)}$ is the K-pulse integration 
threshold from \eqref{eq:rmax}. Substituting 
$\bar{\gamma}(R_\mathrm{target})/\bar{\gamma}(R_\mathrm{max,det}) 
= (R_\mathrm{max,det}/R_\mathrm{target})^{-4}$ and rearranging, 
the outage event 
$\{\Gamma_K(R_\mathrm{target}) < \gamma_\mathrm{th}^{(K)}\}$ 
is equivalent to 
$\{S_K < K(R_\mathrm{target}/R_\mathrm{max,det})^4\}$, 
yielding \eqref{eq:outage}.
\end{IEEEproof}

\begin{remark}[Limiting Cases]
\label{rem:outage_limits}
For $K = 1$, \eqref{eq:outage} reduces to the single-CPI 
exponential form 
$P_\mathrm{out} = 1 - \exp(-(R_\mathrm{target}/R_\mathrm{max,det})^4)$, 
recovering the classical Swerling~I single-pulse result 
\cite{Richards2014}. As $K \to \infty$, the law of large 
numbers gives $S_K/K \xrightarrow{a.s.} 1$, so 
$P_\mathrm{out}$ converges to a unit step at 
$R_\mathrm{target} = R_\mathrm{max,det}$, recovering the 
deterministic limit. The relative variance 
$\mathrm{Var}(S_K/K) = 1/K$ explains the rapid 
convergence: even modest values of $K$ yield 
near-deterministic behavior, as quantified in 
Corollary~\ref{cor:outage}.
\end{remark}

\begin{corollary}[K-Dependent Reliable Range]
\label{cor:outage}
At outage tolerance $\epsilon$, the operationally reliable 
range $R_\epsilon(K)$ satisfies
\begin{equation}
\frac{R_\epsilon(K)}{R_\mathrm{max,det}} = 
\left(\frac{F_\Gamma^{-1}(\epsilon;\,K,\,1)}{K}\right)^{\!1/4}.
\label{eq:R_epsilon}
\end{equation}
Numerical inversion at $\epsilon = 0.1$ gives 
$R_{0.1}/R_\mathrm{max,det} \in 
\{0.57, 0.81, 0.91, 0.96, 0.98\}$ for 
$K \in \{1, 4, 16, 64, 256\}$. For Gateway co-moving debris, 
where $R_\mathrm{max,det}(\theta) \approx 700$~km is nearly 
uniform across the orbit (Fig.~\ref{fig:rmax_warning}(a)), 
$K = 16$ keeps $P_\mathrm{out}(\theta) < 0.10$ for 
$R_\mathrm{target} \leq 600$~km at every phase. For external 
threats, the perilune dip in $R_\mathrm{max,det}$ to 
$\approx 420$~km (Fig.~\ref{fig:coverage}(b)) dominates the 
outage behavior: at $R_\mathrm{target} = 400$~km the 
normalized range $R_\mathrm{target}/R_\mathrm{max,det}$ 
approaches $0.95$ near $\theta = \pi$, raising 
$P_\mathrm{out}$ to $\approx 0.19$, while at 
$R_\mathrm{target} = 500$~km the requirement exceeds 
$R_\mathrm{max,det}$ entirely and outage saturates at unity 
in a narrow window around perilune, as shown in 
Fig.~\ref{fig:outage}(b).
\end{corollary}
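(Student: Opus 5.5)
The plan is to invert the outage formula of Proposition~\ref{prop:outage} directly. By definition, $R_\epsilon(K)$ is the target range at which the outage probability equals the tolerance $\epsilon$. So I would start by setting $P_\mathrm{out}(\theta,K)=\epsilon$ in \eqref{eq:outage}. This gives
\begin{equation*}
F_\Gamma\!\left(K\,(R_\epsilon/R_\mathrm{max,det})^4;\,K,\,1\right)=\epsilon .
\end{equation*}

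Next I need to justify that this equation has a unique solution and that it really marks the largest reliable range. The Gamma$(K,1)$ CDF is continuous and strictly increasing on $(0,\infty)$, since its density is positive there. It maps $(0,\infty)$ onto $(0,1)$, so $F_\Gamma^{-1}(\cdot;K,1)$ is well defined for every $\epsilon\in(0,1)$. The argument $K(R_\mathrm{target}/R_\mathrm{max,det})^4$ is strictly increasing in $R_\mathrm{target}$. Hence $P_\mathrm{out}$ is strictly increasing in $R_\mathrm{target}$, and the set $\{R_\mathrm{target}: P_\mathrm{out}\le\epsilon\}$ is exactly the interval $[0,R_\epsilon]$. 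This is what makes $R_\epsilon$ the operationally reliable range.

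Applying $F_\Gamma^{-1}$ to both sides then gives $K(R_\epsilon/R_\mathrm{max,det})^4=F_\Gamma^{-1}(\epsilon;K,1)$. Dividing by $K$ and taking the positive fourth root yields \eqref{eq:R_epsilon}. The formula does not depend on $\theta$ except through $R_\mathrm{max,det}(\theta)$, which is why it can be applied phase by phase.

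For the numerical values, I would evaluate the Gamma quantile numerically, for example by Newton iteration on the regularized lower incomplete gamma function. The case $K=1$ serves as a check: $F^{-1}(0.1;1,1)=-\ln 0.9\approx 0.105$, so $(0.105)^{1/4}\approx 0.57$. For larger $K$, the Wilson--Hilferty approximation gives $F^{-1}(\epsilon;K,1)/K\approx(1-1/(9K)-z_{0.9}/(3\sqrt K))^3$ with $z_{0.9}\approx1.2816$. This confirms the trend toward 1 consistent with Remark~\ref{rem:outage_limits}. For $K=16$ it gives $(0.9130)^3\approx 0.761$, whose fourth root is $\approx 0.934$. The paper's value of $0.91$ is slightly lower, so this entry needs to be checked against an exact quantile rather than the approximation.

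The operational statements for Gateway debris and external threats are then obtained by substitution. I would take the phase-dependent $R_\mathrm{max,det}(\theta)$ from Fig.~\ref{fig:rmax_warning}(a) and Fig.~\ref{fig:coverage}(b) and plug it into \eqref{eq:outage} or \eqref{eq:R_epsilon}. For Gateway debris, $600/700\approx0.857$ is below $R_{0.1}/R_\mathrm{max,det}$ at $K=16$, so $P_\mathrm{out}<0.1$. For external threats near perilune, the normalized range $0.95$ gives $F_\Gamma(16\cdot0.95^4;16,1)=F_\Gamma(13.03;16,1)$, which should come out near $0.19$. Ranges above $R_\mathrm{max,det}$ push the outage toward unity rather than exactly to 1.

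The main obstacle is that these last claims are numerical and depend on figure-read values of $R_\mathrm{max,det}$, not on an analytic argument. The $K=16$ table entry also needs to be reconciled with an exact quantile evaluation. The inversion itself is routine once the monotonicity step is in place.
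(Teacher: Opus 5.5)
Your inversion of \eqref{eq:outage} is exactly how \eqref{eq:R_epsilon} follows from Proposition~\ref{prop:outage}; the paper states it without proof. Your monotonicity argument, giving $\{R_\mathrm{target}:P_\mathrm{out}\le\epsilon\}=[0,R_\epsilon]$, is a useful addition. The problems are in the numerical part, which the statement also asserts.

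First, your Wilson--Hilferty evaluation at $K=16$ contains an arithmetic slip. The bracket is $1-\tfrac{1}{144}-\tfrac{1.2816}{12}\approx 0.886$; you divided $z_{0.9}$ by $K=16$ instead of $3\sqrt{K}=12$. Its cube is $\approx 0.696$, and the fourth root is $\approx 0.913$. The exact quantile agrees: $F_\Gamma^{-1}(0.1;16,1)=\tfrac12\chi^2_{32,0.10}\approx 11.14$, giving $(11.14/16)^{1/4}\approx 0.913$. So there is nothing to reconcile with the tabulated $0.91$. The same computation, done correctly, reproduces $0.81$, $0.96$ and $0.98$ for $K=4,64,256$.

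Second, the external-threat check fails as you state it. In fact $F_\Gamma(13.03;16,1)=\Pr\{\mathrm{Poisson}(13.03)\ge 16\}\approx 0.24$, not $0.19$. Substituting a normalized range of $0.95$ at $K=16$ therefore does not reproduce the claimed $P_\mathrm{out}\approx 0.19$. At $K=16$, an outage of $0.19$ corresponds to a normalized range of about $0.94$, i.e.\ $R_\mathrm{max,det}\approx 426$~km for $R_\mathrm{target}=400$~km. Alternatively, at a normalized range of $0.95$ it requires $K\approx 23$. Writing that the substitution ``should come out near $0.19$'' leaves this claim unverified, and with the quoted inputs it is wrong. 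You need either to identify which $(K,R_\mathrm{max,det})$ pair the figure actually uses, or to record that the inputs as stated give $\approx 0.24$.

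For the Gateway-debris claim, check the worst phase rather than the nominal $700$~km. Fig.~\ref{fig:rmax_warning}(a) gives $R_\mathrm{max,det}\ge 670$~km, so $600/670\approx 0.896<0.913$, which gives $P_\mathrm{out}\approx 0.06$. The claim does hold at every phase, but the range margin is only about $2\%$.
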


Fig.~\ref{fig:outage}(a) plots $P_\mathrm{out}$ versus 
normalized range $R/R_\mathrm{max,det}$ for 
$K \in \{1, 4, 16, 64, 256\}$, with the dashed reference 
showing the single-CPI form. The $K=1$ curve matches the 
reference, confirming the limiting case in 
Remark~\ref{rem:outage_limits}. As $K$ increases, the 
curves steepen and shift toward 
$R/R_\mathrm{max,det} = 1$, reflecting the $1/\sqrt{K}$ 
variance reduction of $S_K/K$.
Fig.~\ref{fig:outage}(b) traces $P_\mathrm{out}(\theta)$ 
along the NRHO orbit for four operational range requirements 
at $K = K^*(\theta)$ from Proposition~\ref{prop:cpi}. 
At the nominal operating point $K \approx 16$, outage 
remains below 10\% for $R_\mathrm{target} \leq 400$~km 
across the apolune-side majority of the orbit; the only 
violation occurs in a narrow $\pm 10^\circ$ window around 
perilune, where the ICI-reduced $R_\mathrm{max,det}$ for 
external threats brings the normalized range close to unity 
and raises $P_\mathrm{out}$ to $\approx 0.19$ at 
$R_\mathrm{target} = 400$~km. This perilune-localized 
degradation provides quantitative justification for the 
risk-weighted requirement profile of 
Section~\ref{sec:adaptive_results}, which deliberately 
relaxes $R_\mathrm{target}(\theta)$ near perilune where the 
encounter density is also lowest. The 
$R_\mathrm{target} = 500$~km curve spikes to near-unity 
at perilune because the requirement exceeds $R_\mathrm{max,det}$ 
entirely there, making it infeasible regardless of $K$ 
(Fig.~\ref{fig:coverage}(b)). The central operational 
implication of Corollary~\ref{cor:outage} is that, at 
integration depths relevant to cislunar opportunistic 
ISAC and away from the perilune ICI dip, RCS fluctuation 
costs less than 10\% of the deterministic detection 
range---approximately a factor of five less than the 
$43\%$ loss obtained from the $K = 1$ case.

\section{Conclusion}
\label{sec:conclusion}

This paper proposed CisLunarSense, an opportunistic ISAC 
framework tailored to the cislunar space environment.
The orbit-phase-dependent CRB reveals that NRHO velocity 
structure couples with ICI to create two operating regimes: 
operational debris ($v_\mathrm{rel} < 50$~m/s, detectable within 
700~km, $>$30~min warning) and external threats 
($v_\mathrm{rel}$ up to 500~m/s, 400--630~km).
The orbit-phase-adaptive allocation reduces the duty cycle 
from 60\% to 19\% and recovers 46~Mbps of throughput 
(44$\to$90~Mbps); $\sim$82\% of the gain comes from 
risk-weighted per-phase $R_\mathrm{target}(\theta)$ 
specification and the remaining 18\% from phase-by-phase 
adaptation unique to eccentric orbits 
(Fig.~\ref{fig:adaptive_rho}).
The $K$-CPI sensing outage probability 
(Proposition~\ref{prop:outage}) shows that, at the nominal 
operating point $K = 16$, RCS fluctuation costs less than 10\% 
of the deterministic detection range, confirming that 
fluctuation is not the binding design constraint at 
operationally relevant integration depths.
The framework scales with the evolving cislunar threat 
environment without dedicated radar hardware, requiring only 
multicarrier-capable relay transceivers.
Future work includes multistatic configurations, solar radiation 
pressure for long-term trajectory prediction, 
track-while-scan orbit determination, and comparison with 
alternative sensing waveforms (e.g., FMCW, OTFS) on the same 
RF chain.

\bibliographystyle{IEEEtran}
\bibliography{refs}

\end{document}